\newcommand{\blind}{1}
\DeclareMathOperator*{\argmin}{arg\,min}
\DeclareMathOperator*{\other}{other}
\DeclareMathOperator*{\upper}{upper}
\newtheorem{lemma}{Lemma}
\newtheorem{theorem}{Theorem}
\newtheorem{condition}{Condition}
\newtheorem{corollary}{Corollary}
\DeclareMathOperator*{\rem}{rem}
\begin{document}
\bibliographystyle{agsm}



\if1\blind
{
  \title{\bf Sparse-Input Neural Networks for High-dimensional Nonparametric Regression and Classification}
  \author{Jean Feng and Noah Simon
  	\thanks{
  		Jean Feng was supported by NIH grants DP5OD019820 and T32CA206089. Noah Simon was supported by NIH grant DP5OD019820.}
	\\
    Department of Biostatistics, University of Washington}
  \maketitle
} \fi

\if0\blind
{
  \bigskip
  \bigskip
  \bigskip
  \begin{center}
    {\LARGE\bf Sparse-Input Neural Networks for High-dimensional Nonparametric Regression and Classification}
\end{center}
  \medskip
} \fi

\bigskip
\begin{abstract}
Neural networks are usually not the tool of choice for nonparametric high-dimensional problems where the number of input features is much larger than the number of observations. Though neural networks can approximate complex multivariate functions, they generally require a large number of training observations to obtain reasonable fits, unless one can learn the appropriate network structure. In this manuscript, we show that neural networks can be applied successfully to high-dimensional settings if the true function falls in a low dimensional subspace, and proper regularization is used. We propose fitting a neural network with a sparse group lasso penalty on the first-layer input weights. This results in a neural net that only uses a small subset of the original features. In addition, we characterize the statistical convergence of the penalized empirical risk minimizer to the optimal neural network: we show that the excess risk of this penalized estimator only grows with the logarithm of the number of input features; and we show that the weights of irrelevant features converge to zero. Via simulation studies and data analyses, we show that these sparse-input neural networks outperform existing nonparametric high-dimensional estimation methods when the data has complex higher-order interactions.
\end{abstract}

\noindent%
{\it Keywords:}
Feature selection, Regularization, Interactions, Lasso
\vfill

\newpage

	\section{Introduction}
	It is often of interest to predict a response $y$ from a set of inputs $x$.
	In many applications, the relationship between $x$ and $y$ can be quite complex and its functional form may be difficult to know apriori.
	In low and moderate dimensional settings, there are many methods that have been effective for estimating such complex relationships.
	For classification problems, popular methods include kernel extensions of Support Vector Machines, $k$-nearest neighbors, and classification trees \citep{cristianini2000introduction, kernel2008review};
	for regression problems, popular methods include spline regression, kernel regression and regression trees \citet{nadaraya1964estimating, watson1964smooth, breiman1984classification}.
	Neural networks have also proven to be particularly effective in problems from complex domains where other methods have had limited success (e.g. speech recognition, computer vision, and natural language processing, among others; \citet{graves2013speech, krizhevsky2012imagenet, szegedy2015going, socher2013recursive, mikolov2013distributed}).
	
	With the latest technological developments in biology and other fields, it is now very common to encounter high-dimensional data, where the number of features $p$ may far exceed the number of observations $n$.
	For example, in genome-wide, and epigenome-wide studies, it is common to collect thousands or millions of features on each of hundreds or thousands of subjects.
	For general problems in this setting, fully nonparametric methods like random forests or neural networks are rarely used since the number of training observations required for good performance is prohibitive.
	Instead, it is more typical to apply the Lasso \citep{tibshirani1996regression} or its additive non-parametric extensions such as Sparse Additive Models (SpAM) \citep{ravikumar2007spam} and high-dimensional additive models \citep{meier2009high}.
	These methods typically model the data using the sum of a small number of univariate (or very low-dimensional) functions.
	Unfortunately in actual scientific problems, the response may depend on complex interactions between multiple covariates, and failing to model these interactions can result in highly biased estimates.
	
	As existing estimation methods are ineffective for high-dimensional problems with complex interactions, we propose to address this gap using penalized neural networks.
	In particular, we leverage a unique quality of neural nets that sets them apart from more traditional nonparametric methods: With relatively few parameters, a neural net is able to approximate models with multivariate interaction terms \citep{barron1993universal}.
	This is in contrast to the exponential number of terms necessary required by polynomial or spline regression to model general multivariate functions.
	
	In this paper, we propose controlling the size of the neural network space using the sparse group lasso penalty, which is a mixed $\ell_1 / \ell_2$ penalty \citep{simon2013sparse}.
	Our method, sparse-input neural networks (SPINN), groups weights connected to the same input node and consequently selects a small subset of informative features for modeling a potentially complex response.
	We also provide a generalized gradient descent algorithm for training the network.
	
	To understand the theoretical properties of these sparse-input neural networks, we prove oracle inequalities that give probabilistic performance guarantees, assuming we have reached a global minimizer of our penalized criterion.
	We show that, if the response is best approximated by a sparse neural network that uses only $s$ of the features, then the difference between the prediction error of a sparse-input neural network and the prediction error of the best neural network shrinks at a rate of $O_p(n^{-1}s^{5/2} \log p)$ (here we treat the number of hidden nodes and layers as fixed).
	Hence the prediction error of sparse-input neural networks grows slowly with the number of features, making them suitable for high-dimensional problems.
	In addition, we show that the weights connected to the irrelevant input features also converge to zero.
	To our knowledge, there have been no theoretical results on the shrinkage rates of irrelevant model parameters for neural networks up to now.
	
	The paper is organized as follows.
	Section~\ref{sec:spinn} describes our extension of neural networks for high-dimensional problems and an algorithm to train the network.
	A discussion of work related to our method is provided in Section~\ref{sec:related work}.
	Section~\ref{sec:theory} establishes theoretical guarantees of our model.
	We present simulation studies in Section~\ref{sec:simulations} to better understand how SPINN behaves empirically.
	Finally, we analyze high- and moderate-dimensional data in Section~\ref{sec:data} and find that SPINN can significantly outperform more traditional nonparametric high-dimensional methods when the true function is composed of complex higher-order interaction terms.
	
	\section{Sparse-input neural networks}\label{sec:spinn}
	In this paper, we consider neural networks with a single output node and $L$ hidden layers, where hidden layer $a = 1,...,L$ has $m_a$ hidden nodes.
	For convenience, define $m_{L + 1} = 1$ and $m_0 = p$.
	The neural network parameters are denoted by
	$\boldsymbol{\eta} = \{(\boldsymbol{\theta}_a, \boldsymbol{t}_a)\}_{a=1}^{L+1}$ where hidden layer $a$ has weights $\boldsymbol{\theta}_a \in \mathbb{R}^{m_a \times m_{a-1}}$ and intercepts $\boldsymbol{t}_a \in \mathbb{R}^{m_a \times 1}$.
	The total number of parameters in the neural network is $q = \sum_{a=1}^{L+1} m_a (m_{a-1} + 1)$.
	In this paper, we suppose that all hidden nodes have the same activation function $\psi: \mathbb{R} \mapsto \mathbb{R}$, which we assume to be bounded and differentiable.
	For simplicity, $\psi(\boldsymbol{v})$ for a matrix $\boldsymbol{v}$ will denote applying $\psi$ to each element in $\boldsymbol{v}$.
	
	The output of the neural network is generated by propagating values between the neural network layers in the following recursive manner: the hidden node value at layer $a = 1,...,L$ is defined as
	\begin{align}
	\boldsymbol{z}_a(\boldsymbol{x}) &= \psi \left (
	\boldsymbol{\theta}_a \boldsymbol{z}_{a - 1}(\boldsymbol{x}) + \boldsymbol{t}_{a}
	\right)
	\end{align}
	where $\boldsymbol{z}_{0}(\boldsymbol{x}) = \boldsymbol{x} \in \mathbb{R}^{m_0 \times 1}$.
	For regression problems, the final neural network output is
	\begin{align}
	\label{eq:single-layer-nn}
	f_{\boldsymbol{\eta}}(\boldsymbol{x}) &= \boldsymbol{\theta}_{L + 1} \boldsymbol{z}_{L}(\boldsymbol{x}) + t_{L+1}.
	\end{align}
	A neural network for probabilistic binary classification is similar to \eqref{eq:single-layer-nn} except that we also apply the sigmoid function $\sigma(z) = 1/(1 + \exp(-z))$ at the end to ensure output values are between zero and one:
	\begin{align}
	\label{eq:single-layer-nn-class}
	f_{\boldsymbol{\eta}}(\boldsymbol{x}) &= \sigma \left(
	\boldsymbol{\theta}_{L + 1} \boldsymbol{z}_{L}(\boldsymbol{x}) + t_{L+1}
	\right ).
	\end{align}
	From henceforth in this manuscript, the upper layer network parameters refer to all the parameters above the first hidden layer, i.e. $\boldsymbol{\eta}_{\upper} = \{(\boldsymbol{\theta}_{a}, \boldsymbol{t}_a)\}_{a=2}^{L+1}$.
	
	By the Universal Approximation Theorem \citep{Leshno1993-bj, barron1993universal}, a neural network with a single hidden layer can approximate any function to any desired accuracy as long as the activation function is not a polynomial and there are a sufficient number of hidden nodes.
	Therefore neural networks can be thought of as a nonparametric estimator if we allow the number of hidden nodes/layers to grow as the number of observations increases.
	We may also think of neural networks nonparametric estimators from the viewpoint of sieve estimation: as we grow the number of hidden nodes in a network with a single hidden layer, the function class is monotonically increasing.
	
	Let $\ell(y, z): \mathbb{R} \times \mathbb{R} \mapsto \mathbb{R}^+$ denote the loss function, where $y$ represents the observed response and $z$ represents the prediction from the neural network.
	We suppose that $\ell$ is differentiable with respect to $z$.
	For regression problems, the loss function is typically the squared error loss $\ell(y,z) = (y - z)^2$.
	For classification problems, the loss function is typically the logistic loss $\ell(y,z) = -y\log z - (1 - y) \log (1 - z)$.
	
	Given observations $(\boldsymbol{x}_i, y_i)$ for $i = 1,...,n$, we propose fitting a sparse-input neural network where we penalize the first-layer weights using a sparse group lasso penalty and the upper-layer weights using a ridge penalty.
	Let $\boldsymbol{\theta}_{1, \cdot, j}$ denote the first-layer weights that are multiplied with the $j$th covariate (the $j$th column in the matrix $\boldsymbol{\theta}_1$).
	We fit SPINN by solving
	\begin{align}
	\begin{split}
	\label{eq:sgl_nn}
	\argmin_{\eta}
	& \frac{1}{n}
	\sum_{i=1}^n
	\ell \left (y_i, f_{\boldsymbol{\eta}}(\boldsymbol{x}_i) \right)
	+ \lambda_0 \sum_{a=2}^{L+1} \| \boldsymbol{\theta}_{a} \|^2_2
	+ \lambda \sum_{j=1}^p  \Omega_\alpha(\boldsymbol{\theta}_{1, \cdot, j})
	\end{split}
	\end{align}
	where $\alpha \in [0,1]$ and $\Omega_\alpha (\boldsymbol{\theta}) = (1-\alpha) \|\boldsymbol{\theta} \|_1 + \alpha \| \boldsymbol{\theta} \|_2$ is the sparse group lasso.
	Thus $\alpha$ can be thought of as a balancing parameter between the Lasso and the group lasso \citep{yuan2006model}: when $\alpha = 0$, the sparse group lasso reduces to the former and when $\alpha = 1$, it reduces to the latter.
	\begin{figure}
		\centering
		\includegraphics[width=0.5\textwidth]{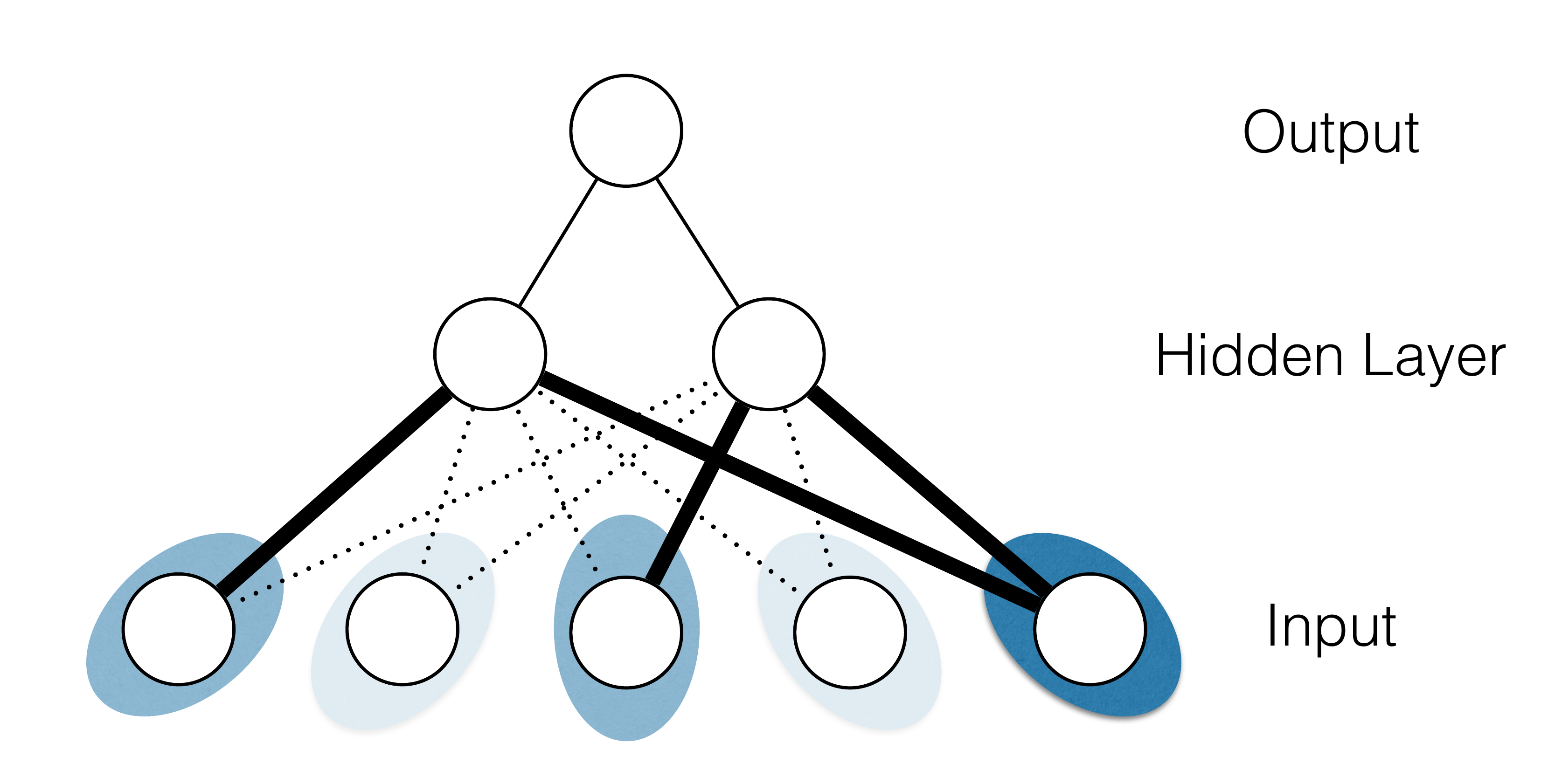}
		\caption{An example of a sparse-input neural network. The heavy and dotted lines indicate nonzero and zero weights, respectively. Each shaded oval corresponds to a group of first-layer weights. The weights from the dark blue oval are both nonzero. Each medium blue oval has a single nonzero weight, so they exhibit element-wise sparsity. All the weights in the light blue ovals are zero, so they exhibit group-level sparsity.}
		\label{fig:sgl_example}
		\vspace{-0.15in}
	\end{figure}
	
	We have three types of penalties in this criterion.
	The ridge penalty $\|\cdot\|^2_2$ serves to control the magnitude of the weights that are not in the first layer.
	The sparse group lasso $\Omega_\alpha $ is a mixture of the group lasso and lasso penalties \citep{simon2013sparse}.
	The group lasso penalty on $\boldsymbol{\theta}_{1,\cdot, j}$ encourages sparsity at the input level by encouraging the entire vector $\boldsymbol{\theta}_{1,\cdot, j}$ to be zero.
	The lasso penalty on $\boldsymbol{\theta}_1$ encourages sparsity across all the weights, so the hidden nodes are encouraged to connect to only a few input nodes.
	The parameter $\alpha$ allows us to control the level of sparsity at the level of the inputs vs. the individual weights.
	Pictorially, \eqref{eq:sgl_nn} encourages fitting neural networks like that in Figure~\ref{fig:sgl_example}.
	
	One could also consider adding a sparsity penalty to upper layer parameters.
	This is useful if the upper hidden layers have too many nodes and many need to be pruned away.
	However in the high-dimensional setting, performance is usually better for networks that have a small number of hidden layers and the upper hidden layers have a small number of nodes.

	\subsection{Learning}
	\label{sec:train}
	Sparse-input neural networks can be trained using generalized gradient descent \citep{daubechies2004iterative, beck2009fast, nesterov2013introductory}.
	Though generalized gradient descent was originally developed for convex problems, we can apply the work by \citet{gong2013general} to find a critical point in non-convex objective functions.
	Here we specialize their proposal, called GIST, to solve \eqref{eq:sgl_nn}.
	This algorithm is similar to that in \citet{alvarez2016learning}.
	
	Let $\mathcal{L}_{x,y}^{\text{smooth}}(\boldsymbol \eta)$ be the smooth component of the loss function in \eqref{eq:sgl_nn} defined as follows
	\begin{align}
	\mathcal{L}_{x,y}^{\text{smooth}}(\boldsymbol{\eta}) =
	\frac{1}{n}
	\sum_{i=1}^n
	\ell \left (
	y_i, f_{\boldsymbol{\eta}}(\boldsymbol{x}_i)
	\right )
	+ \lambda_0 \sum_{a=2}^{L+1} \| \boldsymbol{\theta}_a \|^2_2.
	\label{eq:smooth-loss}
	\end{align}
	Let $S(\cdot, \cdot): \mathbb{R}^p \times \mathbb{R} \mapsto \mathbb{R}^p$ be the coordinate-wise soft-thresholding operator
	\begin{align}
	\left (S\left(
	\boldsymbol{z}, c
	\right)\right )_j
	= \text{sign}(z_j) \left (
	|z_j| - c
	\right )_+.
	\end{align}
	
	The algorithm for training a sparse-input neural network is given in Algorithm~\ref{algo:gen_grad_descent}. The proximal gradient step is composed of three sub-steps. The first sub-step \eqref{eq:backprop} performs a gradient update step only for the smooth component of the loss; the gradient can be computed using the standard back-propagation algorithm. The second and third sub-steps, \eqref{eq:prox1} and \eqref{eq:prox2}, are the proximal operations for the Sparse Group Lasso \citep{simon2013sparse}: a soft-thresholding operation followed by a soft-scaling operation on each $\boldsymbol{\theta}_{1,\cdot,j}$.
	Therefore training a sparse-input neural networks should take a similar amount of time as training a traditional neural network where the objective function only contains a ridge penalty.
	
	At each iteration of the algorithm, the step size $\gamma_k$ is initialized with some value in $[\gamma_{\min}, \gamma_{\max}]$ and then tuned according to a monotone line search criterion.
	In our implementation, we simply used a fixed value, e.g. $\gamma_{min} = \gamma_{max}$, though one can also adaptively choose the initial step size \citep{barzilai1988two, gong2013general}.
	Let $\boldsymbol{\eta}^{(k - 1, 2)}$ be the model parameters at the $k - 1$-th iteration and  $\boldsymbol{\eta}^{(k , 2)}$ be the proposed model parameters for step size $\gamma_k$ at iteration $k$.
	The monotone line search criterion accepts step size $\gamma_k$ if the following condition is satisfied:
	\begin{align}
	L(\boldsymbol{\eta}^{(k,2)}) \le L(\boldsymbol{\eta}^{(k - 1,2)}) - t \gamma_k \|\boldsymbol{\eta}^{(k,2)} - \boldsymbol{\eta}^{(k - 1,2)} \|^2
	\end{align}
	where $L(\cdot)$ is the objective function of \eqref{eq:sgl_nn} and $ t \in (0,1)$.
	This line search criterion guarantees that Algorithm~\ref{algo:gen_grad_descent} converges to a critical point (where the subdifferential contains zero) \citep{gong2013general}. 
	
	Finally, another option for training sparse-input neural networks is to apply the accelerated generalized gradient descent framework for non-convex objective functions developed in \citet{ghadimi2016accelerated}. These are guaranteed to converge to a critical point, and have accelerated rate guarantees.
	
	\begin{algorithm}
		\caption{Training sparse-input neural networks}
		\label{algo:gen_grad_descent}
		\begin{algorithmic}
			\STATE{
				Initialize neural network parameters $\boldsymbol{\eta}^{(0,2)}$. Choose $s \in (0,1)$ and $\gamma_{min},\gamma_{max}$ such that $\gamma_{max} \ge \gamma_{min} > 0$.
			}
			\FOR{iteration $k=1,2,...$}
			\STATE 	$\gamma_k \in \left[\gamma_{min}, \gamma_{max} \right]$
			\REPEAT
			\vspace{-0.2in}
			\STATE \begin{align}
			&\boldsymbol{\eta}^{(k,0)}
			= \boldsymbol{\eta}^{(k-1,2)} - \gamma_k \nabla_\eta \mathcal{L}_{x,y}^{\text{smooth}}(\boldsymbol{\eta}^{(k - 1,2)})
			\label{eq:backprop}
			\\
			& \boldsymbol{\theta}^{(k,1)}_1 = S\left ({\boldsymbol\theta}^{(k,0)}_1, \gamma_k \lambda (1 -\alpha) \right)
			\label{eq:prox1}
			\end{align}
			\FOR{$j = 1,..., p$}
			\vspace{-0.15in}
			\STATE
			\begin{align}
			\boldsymbol{\theta}^{(k,2)}_{1,\cdot,j} & = \left(
			1 - \frac{\gamma_k \lambda \alpha}{\left\|\boldsymbol{\theta}^{(k,1)}_{1,\cdot,j}\right\|_2}
			\right)_+
			{\boldsymbol\theta}^{(k,1)}_{1,\cdot,j}
			\label{eq:prox2}
			\end{align}
			\vspace{-0.15in}
			\ENDFOR
			\FOR{$a = 2,..., L+1$}
			\STATE
			\begin{align*}
			\left(
			\boldsymbol{\theta}^{(k,2)}_a,
			\boldsymbol{t}^{(k,2)}_a
			\right)
			= \left(
			\boldsymbol{\theta}^{(k,0)}_a,
			\boldsymbol{t}^{(k,0)}_a
			\right)
			\end{align*}
			\ENDFOR
			\STATE $\gamma_k \coloneqq s \gamma_k$
			\UNTIL{
				line search criterion is satisfied
			}
			\ENDFOR
		\end{algorithmic}
	\end{algorithm}
	
	\subsection{Tuning Hyper-parameters}
	There are two types of hyper-parameters for fitting a sparse-input neural network: the penalty parameters for the ridge and sparse group lasso penalties and those specifying the number of layers and number of nodes per layer in the network.
	The hyper-parameters should be tuned to ensure low generalization error of the model and are typically chosen via cross-validation.
	
	When fitting sparse-input neural networks for high- or moderate-dimensional problems, we found that the hyper-parameter values can be tuned using either grid search or gradient-free optimization methods such as Nelder-Mead \citep{nelder1965simplex} or Bayesian optimization \citep{snoek2012practical}.
	The former method is typically used when there are three or fewer hyper-parameters and the latter class is typically used when there are many hyper-parameters.
	In our implementation, we typically consider a number of possible neural network structures and different penalty parameter values; grouping the hyper-parameters in this manner result in four hyper-parameters to tune in SPINN.
	Thus the number of hyper-parameters is only slightly more than the typical use case in grid search.
	
	The computation time for tuning hyper-parameters can be further reduced by pretuning the range of the possible hyper-parameter values.
	In particular, the optimal network structure for such high-dimensional problems tends to be small since large networks easily overfit to the data.
	For the examples in this paper, we only consider network structures with no more than three hidden layers and no more than fifty hidden nodes per layer.
	In addition, since the upper layers of the network are parameterized by a small number of parameters, we found that the generalization error is relatively insensitive to the ridge penalty parameter.
	For all of our empirical analyses, we pretune the ridge penalty parameter and fix it to a small value (usually $\le$ 0.001).
	
	The rest of the hyper-parameters play a more influential role on the generalization error.
	Understanding their affects on the performance not only increases our understanding of SPINN but also provides guidance for tuning their values.
	We analyze the role of the network structure and the sparse group lasso penalty parameter $\lambda$ from a theoretical standpoint in Section~\ref{sec:theory} and present an empirical analysis of the role of the balancing parameter $\alpha$ in Section~\ref{sec:simulations}.

	\section{Related Work}\label{sec:related work}
	
	A number of other authors have applied lasso and group lasso penalties to neural networks.
	That work has largely been focused on learning network structure however, rather than feature selection.
	\citet{sun1999lasso} was one of the earliest papers that fit a neural network with a lasso penalty over all the weights.
	\citet{scardapane2016group} and \citet{alvarez2016learning} proposed using the sparse group lasso over all the weights in a deep neural network in order to learn a more compact network --- this is in the low dimensional setting with $n \gg p$.
	The recent work by \citet{JMLR:v18:14-546} is most closely aligned to our work: he considers convex neural networks which have a single hidden layer with an unbounded number of hidden nodes.
	He shows that theoretically, a lasso penalty on the input weights of such networks should perform well in high-dimensional settings.
	
	The recent work in \citet{zhang2016understanding} examined the utility of regularization in deep learning. The authors found that using a ridge penalty on the internal network architecture was not necessary for a neural network to have good performance; instead, using an appropriate network architecture led to larger decreases in generalization error. Our results support this claim since the sparse group lasso also performs structure-learning.
	
	Previously, statistical convergence rates for neural networks have been established for problems when the $\ell_1$-norm of the weights are constrained.
	These results show that the estimation error of the neural networks indeed grew with the logarithm of the number of input nodes $p$ \citep{bartlett1998sample, Anthony2009-hr}.
	Our convergence rate results also have a $\log p$ term.
	However, we improve upon previously established convergence rates by showing that the excess estimation error shrinks at a rate of $n^{-1}$ rather than $n^{-1/2}$ as was given in previous bounds.
	This faster convergence rate allows us to additionally bound the rate at which the norm of the irrelevant network weighs shrink to zero.
	
	Our proofs for statistical convergence rates are inspired by \citet{stadler2010}, which considers $\ell_1$-penalization for mixture regression models.
	The techniques used in their paper are relevant as neural networks can be thought of as a mixture of regressions.
	Significant additional work was required however as the identifiability condition assumed in \citet{stadler2010} does not hold for neural networks.

	\section{Theoretical Guarantees}
	\label{sec:theory}
	
	In this section, we provide probabilistic, finite-sample upper bounds on the prediction error of sparse-input neural networks.
	Instead of using a ridge penalty, we constrain the norm of the upper-layer weights.
	Our sparse-input neural network problem now becomes
	\begin{align}
	\begin{split}
	\label{eq:lasso_nn}
	\hat{\boldsymbol{\eta}} &  \in
	\argmin_{\eta \in \Theta}
	\frac{1}{n}
	\sum_{i=1}^n
	\ell \left (
	y_i, f_{\eta}(\boldsymbol{x}_i)
	\right )
	+ \lambda
	\sum_{j=1}^p
	\Omega_\alpha(\boldsymbol{\theta}_{1,\cdot,j})
	\\
	\text{where }
	\Theta & = \left\{
	\boldsymbol{\eta} \in \mathbb{R}^{q}:
	\|\other({\boldsymbol{\eta}})\|_2 \le K
	\right\}
	\end{split}
	\end{align}
	for a constant $K > 0$ and $\other({\boldsymbol{\eta}}) = (\boldsymbol{t}_1, \{(\boldsymbol{\theta}_a, \boldsymbol{t}_a) \}_{a=2}^{L+1})$.
	All of our proofs are in the Supplementary Materials.
	
	Notice that \eqref{eq:lasso_nn} assumes that the estimator is a global minimizer of a non-convex objective function. Since non-convex problems are typically computationally intractable to solve,  there is admittedly a disconnect between the computational algorithm we have proposed and the theoretical results we establish in this section. Though it is desirable to establish theoretical properties for estimators arising from local optima, it is difficult to characterize their behavior and up to now, much of the theory for the Lasso depends on the estimator being a global minimizer. We do not address this issue and leave this problem for future research.
	
	\subsection{Problem Setup and Notation}
	Suppose covariates $X$ have support $ \mathcal{X} \subseteq [-X_{\max}, X_{\max}]^p$ and $\mathcal{X}$ contains some open set.
	In addition, suppose $Y = f^*(X) +\epsilon$ where $\epsilon$ is a random variable with mean zero.
	Let $\mathbb{P}$ denote the expectation with respect to the joint distribution of the covariates $X$ and response $Y$.
	Given $n$ observations, we denote the empirical distribution as $\mathbb{P}_n$.
	
	The neural network can be defined either as in \eqref{eq:single-layer-nn} or \eqref{eq:single-layer-nn-class}, where we suppose the activation function $\psi$ is the hyperbolic tangent function.
	Note that neural networks with the sigmoid function are included in this framework since $\tanh$ and the sigmoid function are related by a linear transformation.
	
	Next we define a neural network equivalence class.
	Given parameter $\boldsymbol{\eta}$, the set of equivalent parameterizations is
	\begin{align}
	EQ(\boldsymbol{\eta}) =
	\left\{ \boldsymbol{\eta}^{'} \in \Theta : f_{\boldsymbol{\eta}^{'}}(\boldsymbol{x}) = f_{\boldsymbol{\eta}}(\boldsymbol{x}) \forall \boldsymbol{x} \in \mathcal{X} \right \}.
	\end{align}
	If $\boldsymbol{\eta}$ is ``generic'' as defined in \citet{Fefferman1994-kz}, $EQ(\boldsymbol{\eta})$ only contains parameterizations that are sign-flips or permutations of $\boldsymbol{\eta}$ and thus $EQ(\boldsymbol{\eta})$  has cardinality of at most $\prod_{a=1}^{L} 2^{m_a} (m_a!)$.
	Roughly, a generic NN parameter is one where all hidden nodes have nonzero input weights, $\boldsymbol{\theta}_{a,j, \cdot} \ne \boldsymbol{0}$, and no two hidden nodes are sign-flips of each other, i.e.  $|\boldsymbol{\theta}_{a,j', \cdot}| \ne |\boldsymbol{\theta}_{a,j, \cdot}|$ for $j \ne j'$.
	
	Let the set of optimal neural networks that minimize the expected loss be denoted
	\begin{align}
	EQ^* = \argmin_{\boldsymbol{\eta} \in \Theta} \mathbb{P} \ell(y, f_{\boldsymbol{\eta}}(\boldsymbol{x})).
	\label{eq:exp_loss}
	\end{align}
	We suppose that $EQ^*$ is the union of $Q \ge 1$ equivalence classes, where all of the optimal neural networks are generic.
	We will suppose that $\Theta$ is chosen large enough such that the gradient of the expected loss at every element in $EQ^*$ is zero.
	
	Next suppose that the expected loss is minimized using neural networks that employ only a few input features $S$, which we call the ``relevant'' features.
	That is, for $\boldsymbol{\eta}^* \in EQ^*$, we suppose that all weights tied to the irrelevant features $S^c$ are zero.
	Note that all neural nets from the same equivalence class have the same set of features with nonzero weights because members of the equivalence class are permutations and/or sign-flips of one another.
	In this work we are particularly interested in the case where the optimal neural networks are sparse, e.g. those where the cardinality of $S$, denoted $|S|$, is small.
	For any $\boldsymbol{\eta} \in \Theta$, let $\boldsymbol{\theta}_{1,\cdot,S}$ denote the weights tied to the input nodes $S$ and $\boldsymbol{\theta}_{1,\cdot,S^c}$ denote the weights tied to the input nodes $S^c$.
	Let $\boldsymbol{\eta}_S$ denote the NN parameters that are exactly the same as $\boldsymbol{\eta}$ except where $\boldsymbol{\theta}_{1,\cdot,S^c}$ is set to zero.
	
	For any $\boldsymbol{\eta}$, let the closest element in $EQ^*$  be defined as
	$$
	\boldsymbol{\eta}^{*(\boldsymbol{\eta})} =
	\{(
	\boldsymbol{\theta}_a^{*(\boldsymbol{\eta})},
	\boldsymbol{t}_a^{*(\boldsymbol{\eta})}
	)\}_{a=1}^{L+1}
	\in \argmin_{\boldsymbol{\eta}^* \in EQ^*}
	\left \| \boldsymbol{\eta} - \boldsymbol{\eta}^* \right \|_{2},
	$$
	where we randomly pick one optimal network if there are multiple that minimize the distance to $\boldsymbol{\eta}$.
	Define the excess loss of a neural network with parameters $\boldsymbol{\eta}$ as
	\begin{align}
	\mathcal{E}(\boldsymbol{\eta}) &=
	\mathbb{P} \ell(y,f_{\boldsymbol{\eta}}(\boldsymbol{x}))
	- \min_{\boldsymbol{\eta}^{'} \in \Theta} \mathbb{P} \ell(y,f_{\boldsymbol{\eta}^{'}}(\boldsymbol{x}))
	\\
	& = \mathbb{P} \left [
	\ell(y,f_{\boldsymbol{\eta}}(\boldsymbol{x})) -
	\ell(y,f_{\boldsymbol{\eta}^{*(\boldsymbol{\eta})}}(\boldsymbol{x}))
	\right ].
	\label{eq:excess}
	\end{align}
	
	\subsection{Results}
	To understand the behavior of our estimated neural network from \eqref{eq:lasso_nn}, we upper bound the excess loss as well as the norm of the weights connected to the irrelevant inputs.
	Our proof technique is inspired by  \citet{stadler2010}; however significant adaptations were needed to deal with the complex loss surface and equivalence classes of neural networks.
	
	In order for our results to hold, we make the assumption that the expected loss is locally strongly convex at all $\boldsymbol{\eta}^{*} \in EQ^{*}$.
	Since this only specifies the local behavior at $EQ^{*}$, this assumption is relatively weak.
	We use the notation $A \succeq B$ to indicate that $A - B$ is a positive semi-definite matrix.
	More formally this assumption states:
	\begin{condition}
		\label{cond:hess}
		Let the neural network parameter $\boldsymbol{\eta}$ be ordered such that
		$$
		\boldsymbol{\eta} = \left (
		\boldsymbol{\theta}_{1,\cdot,S^c}, \boldsymbol{\theta}_{1,\cdot,S}, \boldsymbol{t}_1,
		\boldsymbol{\theta}_2, \boldsymbol{t}_2,
		\cdots,
		\boldsymbol{\theta}_{L+1}, \boldsymbol{t}_{L+1}
		\right ).
		$$
		There is a constant $h_{min}>0$ that may depend on $m_1,...,m_{L+1}, |S|, f^*$ and the distribution $\mathbb{P}$, but does not depend on $p$,
		such that for all $\boldsymbol{\eta}^* \in EQ^*$,
		\begin{align}
		\left[\nabla_{\eta}^{2}\mathbb{P}\ell(y, f_{\boldsymbol{\eta}}(\boldsymbol{x}))\right]_{\eta=\eta^*}
		& \succeq
		h_{min}
		\left[\begin{array}{cc}
		\boldsymbol{0} & \boldsymbol{0}\\
		\boldsymbol{0} & \boldsymbol{I}
		\end{array}\right]
		\end{align}
		where the top left zero matrix is $m_1|S^c| \times m_1|S^c|$ and the bottom right matrix is the identity matrix of dimension $(q - m_1|S^c|) \times (q - m_1|S^c|)$.
	\end{condition}
	
	In addition, we need the following identifiability condition.
	\begin{condition}
		\label{cond:identifiable}
		For all $\epsilon>0$,
		there is an $\chi_{\epsilon}>0$ that may depend on $m_1,...,m_{L+1}, |S|, f^*$ and the distribution $\mathbb{P}$, but does not depend on $p$, such that
		\begin{align*}
		\begin{split}
		\chi_{\epsilon} & \le \inf_{\boldsymbol{\eta} \in \Theta} \left\{ \mathcal{E}\left(\boldsymbol{\eta}\right):
		\|\boldsymbol{\eta}_S - \boldsymbol{\eta}^{*(\boldsymbol \eta)}\|_2
		\ge\epsilon  \text{ and }
		\right .
		\\
		& \left .
		\|\boldsymbol \theta_{1,\cdot,S^c}\|_1 \le
		3 \sum_{j \in S}
		\Omega_\alpha (\boldsymbol \theta_{1,\cdot,j} - \boldsymbol \theta_{1,\cdot,j}^{*(\boldsymbol \eta)}) +
		\|\other(\boldsymbol{\eta}) - \other(\boldsymbol{\eta}^{*(\boldsymbol{\eta})})\|_2
		\right\}.
		\end{split}
		\end{align*}
	\end{condition}
	Condition~\ref{cond:identifiable} places a lower bound on the excess loss of neural networks outside the set of optimal neural networks $EQ^*$.
	However we only need this lower bound to apply to neural networks where the weight of the irrelevant nodes is dominated by the difference between the other parameters in $\boldsymbol{\eta}$ and $\boldsymbol{\eta}^{*(\boldsymbol \eta)}$.
	By restricting to this smaller set of neural networks, it is more realistic to claim that $\chi_{\epsilon}$ is independent of $p$.
	This condition is similar to compatibility conditions used in proving theoretical results for the Lasso \citep{buhlmann2011statistics}.
	
	Finally, we require a bound on the third derivative of the expected loss.
	Since $\mathcal{X}$ and the upper layer weights are bounded, It is easy to show that this condition is satisfied when the loss function $\ell$ is mean squared error or logistic loss.
	\begin{condition}
		\label{condn:bound_third}
		The third derivative of the expected loss function is bounded uniformly over $\Theta$ by some constant $G > 0$ that may depend on $m_1,...,m_{L+1}, |S|, f^*, K$ and the distribution $\mathbb{P}$, but does not depend on $p$:
		\begin{equation}
		\sup_{\boldsymbol{\eta} \in \Theta}\max_{j_{1}j_{2}j_{3}}
		\left|
		\frac{\partial^{3}}{\partial\eta_{j_{1}}\partial\eta_{j_{2}}\partial\eta_{j_{3}}}
		\mathbb{P}\ell(y, f_{\boldsymbol{\eta}}(\boldsymbol{x}))
		\right|\le G.
		\label{eq:third_deriv_bound}
		\end{equation}
	\end{condition}

	With the three conditions above, we have the following theorem.
	We use the notation $a \vee b = \max(a,b)$.
	\begin{theorem}
		\label{thm:sparse_nn}
		For any $\tilde{\lambda}>0$ and $T\ge1$, let 
		\begin{align*}
		\begin{split}
		\mathcal{T}_{\tilde{\lambda},T}=&  \Big \{ \left \{ \left(\boldsymbol{x}_{i},y_{i}\right) \right \} _{i=1}^{n}:
		\\
		&
		\left .
		\sup_{\boldsymbol{\eta} \in \Theta}
		\frac{
			\left|\left(\mathbb{P}_{n}-\mathbb{P}\right)
			\left(
			\ell(y, f_{\eta^{*(\boldsymbol \eta)}}(\boldsymbol{x}))
			-\ell(y, f_{\boldsymbol \eta}(\boldsymbol{x}))
			\right)\right|
		}
		{
			\tilde{\lambda} \vee
			\left (
			\left\Vert \other(\boldsymbol{\eta}) - \other(\boldsymbol{\eta}^{*(\boldsymbol{\eta})}) \right\Vert _{2}
			+ \sum_{j=1}^p \Omega_\alpha(\boldsymbol{\theta}_{1,\cdot,j}-\boldsymbol{\theta}_{1,\cdot,j}^{*(\boldsymbol \eta)}) \right )
		}
		\le T\tilde{\lambda}  \right \} .
		\end{split}
		\end{align*}
		Suppose Conditions~\ref{cond:hess}, \ref{cond:identifiable}, and \ref{condn:bound_third} hold.
		Suppose that the optimal neural networks $EQ^*$ only have nonzero weights for input features in $S$.
		Let $\hat{\boldsymbol{\eta}}$ be a solution of \eqref{eq:lasso_nn}.
		Then over the set $\mathcal{T}_{\tilde \lambda, T}$, we have for any $\lambda\ge 2\tilde{\lambda}T$
		\begin{align}
		\label{eq:thrm_oracle_ineq}
		\mathcal{E}\left(\hat{\boldsymbol\eta}\right)+
		2\left(\lambda-T\tilde{\lambda}\right) \sum_{j \in S^c} \Omega_\alpha(\hat{\boldsymbol \theta}_{1,\cdot,j})
		\le
		\left(T\tilde{\lambda}+\lambda\right)^{2}
		((1-\alpha)\sqrt{m_1} + \alpha )^2|S| C_{0}^{2}
		\end{align}
		where
		\begin{align}
		C_{0}^{2} & = \frac{1}{\epsilon_{0}} \vee \frac{C^2_1 (K + \max_{\eta^* \in EQ^*} \sum_{j \in S} \Omega_\alpha(\boldsymbol \theta_{1,\cdot,j^*}))^{2}}{\chi_{\epsilon_{0}}},\\
		\epsilon_{0} & =\frac{h_{min} (1 -\alpha + \alpha/\sqrt{m_1})^3}{C_1 G \left(
			(1 - \alpha) \sqrt{m_1|S|}  + \alpha \sqrt{|S|}
			+ \sqrt{m^*}
			\right)^3}
		\end{align}
		for some constant $C_1 > 0$.
	\end{theorem}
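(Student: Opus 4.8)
The plan is to adapt the basic-inequality argument for $\ell_1$-type penalties (as in \citet{stadler2010, buhlmann2011statistics}) to the sparse group lasso penalty and to the non-convex neural-network loss surface; the entire argument is deterministic on the set $\mathcal{T}_{\tilde\lambda,T}$. Throughout write $\boldsymbol\eta^{*}=\boldsymbol\eta^{*(\hat{\boldsymbol\eta})}$, and recall that $\boldsymbol\theta^{*}_{1,\cdot,j}=\boldsymbol 0$ for $j\in S^{c}$ and that $\boldsymbol\eta^{*}$ is a global minimizer of the expected loss with zero gradient. First I would write the basic inequality: since $\hat{\boldsymbol\eta}$ minimizes the penalized empirical criterion \eqref{eq:lasso_nn} and $\boldsymbol\eta^{*}\in\Theta$,
\[
\mathbb{P}_{n}\ell(y,f_{\hat{\boldsymbol\eta}}(\boldsymbol x))+\lambda\sum_{j=1}^{p}\Omega_{\alpha}(\hat{\boldsymbol\theta}_{1,\cdot,j})\le\mathbb{P}_{n}\ell(y,f_{\boldsymbol\eta^{*}}(\boldsymbol x))+\lambda\sum_{j\in S}\Omega_{\alpha}(\boldsymbol\theta^{*}_{1,\cdot,j}).
\]
Add and subtract $\mathbb{P}$, use $\mathcal{E}(\hat{\boldsymbol\eta})=\mathbb{P}[\ell(y,f_{\hat{\boldsymbol\eta}})-\ell(y,f_{\boldsymbol\eta^{*}})]$ (valid since $\boldsymbol\eta^{*}$ is optimal), and bound the empirical-process term $(\mathbb{P}_{n}-\mathbb{P})(\ell(y,f_{\boldsymbol\eta^{*}})-\ell(y,f_{\hat{\boldsymbol\eta}}))$ by the defining inequality of $\mathcal{T}_{\tilde\lambda,T}$, so that it is at most $T\tilde\lambda$ times $\tilde\lambda\vee(\|\other(\hat{\boldsymbol\eta})-\other(\boldsymbol\eta^{*})\|_{2}+\sum_{j}\Omega_{\alpha}(\hat{\boldsymbol\theta}_{1,\cdot,j}-\boldsymbol\theta^{*}_{1,\cdot,j}))$.

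Next I would split $\sum_{j}$ into $j\in S$ and $j\in S^{c}$, use the triangle inequality $\Omega_{\alpha}(\hat{\boldsymbol\theta}_{1,\cdot,j})\ge\Omega_{\alpha}(\boldsymbol\theta^{*}_{1,\cdot,j})-\Omega_{\alpha}(\hat{\boldsymbol\theta}_{1,\cdot,j}-\boldsymbol\theta^{*}_{1,\cdot,j})$ on $S$ and $\hat{\boldsymbol\theta}_{1,\cdot,j}-\boldsymbol\theta^{*}_{1,\cdot,j}=\hat{\boldsymbol\theta}_{1,\cdot,j}$ on $S^{c}$. In the case where the maximum above equals $\tilde\lambda$, the claim follows directly from $\lambda\ge2\tilde\lambda T$ (after possibly enlarging the constant in $C_{0}$); otherwise, moving the $S^{c}$-penalty to the left, using $T\tilde\lambda\le T\tilde\lambda+\lambda$, and then $\mathcal{E}(\hat{\boldsymbol\eta})\ge0$ once more, one obtains a ``compatibility'' inequality
\[
\mathcal{E}(\hat{\boldsymbol\eta})+2(\lambda-T\tilde\lambda)\sum_{j\in S^{c}}\Omega_{\alpha}(\hat{\boldsymbol\theta}_{1,\cdot,j})\le2(T\tilde\lambda+\lambda)\Big(\|\other(\hat{\boldsymbol\eta})-\other(\boldsymbol\eta^{*})\|_{2}+\sum_{j\in S}\Omega_{\alpha}(\hat{\boldsymbol\theta}_{1,\cdot,j}-\boldsymbol\theta^{*}_{1,\cdot,j})\Big),
\]
together with the ``cone'' inclusion $\|\hat{\boldsymbol\theta}_{1,\cdot,S^{c}}\|_{1}\le3\sum_{j\in S}\Omega_{\alpha}(\hat{\boldsymbol\theta}_{1,\cdot,j}-\boldsymbol\theta^{*}_{1,\cdot,j})+\|\other(\hat{\boldsymbol\eta})-\other(\boldsymbol\eta^{*})\|_{2}$, which is exactly the restriction under which Condition~\ref{cond:identifiable} applies. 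Using $\Omega_{\alpha}(\boldsymbol\theta)\le((1-\alpha)\sqrt{m_{1}}+\alpha)\|\boldsymbol\theta\|_{2}$ for $\boldsymbol\theta\in\mathbb R^{m_{1}}$ and Cauchy--Schwarz, the right-hand side is bounded by a constant multiple of $(T\tilde\lambda+\lambda)\,((1-\alpha)\sqrt{m_{1}}+\alpha)\sqrt{|S|}\,\|\hat{\boldsymbol\eta}_{S}-\boldsymbol\eta^{*}\|_{2}$.

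The heart of the proof, and the step I expect to be hardest, is a quadratic \emph{margin} bound $\|\hat{\boldsymbol\eta}_{S}-\boldsymbol\eta^{*}\|_{2}\le C_{0}\sqrt{\mathcal{E}(\hat{\boldsymbol\eta})}$; substituting it into the previous bound and solving the resulting quadratic inequality in $\sqrt{\mathcal{E}(\hat{\boldsymbol\eta})}$ yields \eqref{eq:thrm_oracle_ineq}. I would prove the margin bound in two regimes. When $\|\hat{\boldsymbol\eta}_{S}-\boldsymbol\eta^{*}\|_{2}\le\epsilon_{0}$, Taylor-expand $\mathcal{E}(\hat{\boldsymbol\eta})$ to second order about $\boldsymbol\eta^{*}$: the first-order term vanishes because the gradient is zero at $\boldsymbol\eta^{*}$, Condition~\ref{cond:hess} lower bounds the quadratic term by $h_{min}\|\hat{\boldsymbol\eta}_{S}-\boldsymbol\eta^{*}\|_{2}^{2}$ (the $S^{c}$-block of the Hessian being zero is precisely why the irrelevant weights must instead be controlled through the $\ell_{1}$ part of the penalty), and Condition~\ref{condn:bound_third} bounds the cubic remainder by $\tfrac{G}{6}\|\hat{\boldsymbol\eta}-\boldsymbol\eta^{*}\|_{1}^{3}$. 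The cone inclusion lets me bound $\|\hat{\boldsymbol\eta}-\boldsymbol\eta^{*}\|_{1}$ by a constant times $((1-\alpha)\sqrt{m_{1}|S|}+\alpha\sqrt{|S|}+\sqrt{m^{*}})\,\|\hat{\boldsymbol\eta}_{S}-\boldsymbol\eta^{*}\|_{2}$, and $\epsilon_{0}$ is defined exactly so that this remainder is at most half the quadratic term, leaving $\mathcal{E}(\hat{\boldsymbol\eta})\ge\tfrac{h_{min}}{2}\|\hat{\boldsymbol\eta}_{S}-\boldsymbol\eta^{*}\|_{2}^{2}$ --- the margin bound with the $1/\epsilon_{0}$ branch of $C_{0}^{2}$ (enlarging $G$ if necessary so that $1/\epsilon_{0}\ge 2/h_{min}$). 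When instead $\|\hat{\boldsymbol\eta}_{S}-\boldsymbol\eta^{*}\|_{2}>\epsilon_{0}$, the cone inclusion places $\hat{\boldsymbol\eta}$ in the set over which Condition~\ref{cond:identifiable} takes its infimum, so $\mathcal{E}(\hat{\boldsymbol\eta})\ge\chi_{\epsilon_{0}}$; combining this positive lower bound with a crude a priori bound on the relevant coordinates of $\hat{\boldsymbol\eta}$ in terms of $K$ and $\max_{\boldsymbol\eta^{*}\in EQ^{*}}\sum_{j\in S}\Omega_{\alpha}(\boldsymbol\theta^{*}_{1,\cdot,j})$ (available since $\other(\cdot)$ is norm-constrained in $\Theta$ and boundedness of $\psi$, $\mathcal X$ and the upper weights forces $f_{\hat{\boldsymbol\eta}}$, hence $\mathcal{E}(\hat{\boldsymbol\eta})$, to be bounded) yields the second branch of $C_{0}^{2}$.

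The main obstacle is this margin step. Unlike in \citet{stadler2010}, the loss is genuinely non-convex, the Hessian at $\boldsymbol\eta^{*}$ is only positive semi-definite (degenerate exactly in the irrelevant directions), and $EQ^{*}$ is a union of equivalence classes, so the quadratic lower bound can only hold locally; one must patch the local regime (Conditions~\ref{cond:hess} and \ref{condn:bound_third}) to the global regime (Condition~\ref{cond:identifiable}) while carefully tracking how the cone inclusion feeds simultaneously into the $\ell_{1}$ control of the cubic remainder and into the hypothesis of Condition~\ref{cond:identifiable}, and getting the constants $\epsilon_{0}$ and $C_{0}$ to line up is what makes the bookkeeping heavy. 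Everything else --- the basic inequality, the triangle-inequality manipulation, and the final quadratic-in-$\sqrt{\mathcal{E}(\hat{\boldsymbol\eta})}$ solve --- is routine once the margin bound is in hand.
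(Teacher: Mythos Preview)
Your proposal is correct and follows essentially the same route as the paper: the paper isolates your margin step as a separate lemma (proved exactly as you describe, invoking the Auxiliary Lemma of \citet{stadler2010} to glue the local Taylor regime to Condition~\ref{cond:identifiable}), and then runs the same basic-inequality/two-case/cone-inclusion/AM--GM argument you outline. The only loose thread is your parenthetical justification for the a~priori bound $\|\hat{\boldsymbol\eta}_S-\boldsymbol\eta^*\|_2\le C_1\big(K+\max_{\boldsymbol\eta^*}\sum_{j\in S}\Omega_\alpha(\boldsymbol\theta^*_{1,\cdot,j})\big)$ in the far regime: boundedness of $\mathcal{E}(\hat{\boldsymbol\eta})$ does not by itself control $\hat{\boldsymbol\theta}_{1,\cdot,S}$; the paper (equally tersely) obtains this instead by re-using the basic inequality with $\mathcal{E}\ge 0$ and $\lambda\ge 2T\tilde\lambda$ to bound $\sum_j\Omega_\alpha(\hat{\boldsymbol\theta}_{1,\cdot,j})$ in terms of $K$ and $\sum_{j\in S}\Omega_\alpha(\boldsymbol\theta^*_{1,\cdot,j})$.
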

	Theorem~\ref{thm:sparse_nn} simultaneously bounds the excess loss and the norm of the irrelevant nodes.
	Consider the case where the identifiability constant $\chi_{\epsilon_{0}}$ is sufficiently large such that $C_0^2 = \epsilon_0^{-1}$.
	Then the above theorem states that for $\alpha \in (0,1)$ (e.g. we are using the sparse group lasso and not the degenerate cases with only the lasso or only the group lasso), the excess loss will be on the order of $O_p(\tilde \lambda^2 G m_1^{5/2} |S|^{5/2} )$
	and the norm of $\hat {\boldsymbol{\theta}}_{S^c}$ will shrink at the rate of $O_p(\tilde \lambda  G m_1^{5/2} |S|^{5/2})$.
	The convergence rate of the excess risk is faster for functions that are best approximated by neural networks that are more sparse (e.g. $|S|$ is small).
	If $\tilde{\lambda}$ shrinks as the number of samples increases, these values will go to zero.
	The rate of convergence is fastest if we set $\tilde{\lambda}$ to a very small value; however we must choose $\tilde{\lambda}$ carefully so that $\mathcal{T}_{\tilde{\lambda},T}$ occurs with high probability.
	
	Our convergence rate of here depends on the number of hidden nodes in the first layer and implicitly depends on the upper layers contribute to the rate through the constant $G$ in Condition~\ref{condn:bound_third}.
	If we want the overall estimation error, we must also take into account how the network structure contributes via the approximation error.
	Increasing the number of nodes and layers in the network will decrease the approximation error; however the variance of model will increase and our upper-bound in \eqref{eq:thrm_oracle_ineq} will be large.
	
	The relevance of Theorem~\ref{thm:sparse_nn} depends on the probability of the set ${\mathcal{T}}_{\tilde{\lambda}, T}$.
	To analyze this sets probability, we introduce one more condition that bounds the gradient of the loss with respect to the nodes in the \textit{first hidden layer} and \textit{the network parameters in the upper layers}.
	Similar to Condition~\ref{condn:bound_third}, it is easy to show that this condition is satisfied when the loss function is mean squared error or logistic loss.
	\begin{condition}
		\label{condn:bound_first}
		Define the upper portion of the neural network $f_{\eta}$ as $f_{\eta_{\upper}}: \mathbb{R}^{m_1} \mapsto \mathbb{R}$.
		The gradient of the loss of $f_{\boldsymbol{\eta}_{\upper}}$ is uniformly bounded by:
		\begin{equation}
		\sup_{\boldsymbol{\eta} \in \Theta, \boldsymbol{x} \in \mathcal{X}}
		\left\|
		\left .
		\nabla_{\boldsymbol{z}, \boldsymbol{\eta}_{\upper}}
		\ell(y, f_{\boldsymbol{\eta}_{\upper}}(\boldsymbol{z}))
		\right |_{\boldsymbol{z} = \psi(\boldsymbol{\theta}_1^\top \boldsymbol{x} + \boldsymbol{t})}
		\right\|_\infty
		\le M_1(|y| + M_0)
		\label{eq:first_deriv_bound}
		\end{equation}
		where constants $M_0, M_1 > 0$ may depend on $m_1,...,m_{L+1}, |S|, f^*, K,$ and $X_{\max}$, but does not depend on $p$.
	\end{condition}
	
	We specialize the following theorem to classification and regression settings.
	The proof relies on techniques from empirical process theory.
	Let $m_{\upper}$ be the number of network parameters in the upper layers of the network (i.e. exclude the weights in the first layer).

	\begin{theorem}
		\label{thm:emp_proc}
		Consider the following two settings:
		\begin{enumerate}
			\item Regression setting: Let $$
			y = f^*(\boldsymbol{x}) + \epsilon
			$$
			where $\epsilon$ is a sub-gaussian random variable with mean zero. That is, $\epsilon$ satisfies for some constants $\tau$ and $\sigma_0$,
			\begin{align}
			\tau^2(\mathbb{E} e^{|\epsilon|^2/K_\epsilon^2} - 1) \le \sigma_0^2.
			\end{align}
			We suppose that $\epsilon$ is independent of $\boldsymbol{x}$.
			Suppose $\ell(\cdot,\cdot)$ is squared-error loss.
			\item Classification setting: Let $Y$ take on values $\{0,1\}$ where $p(Y = 1|\boldsymbol{x}) = f^*(\boldsymbol{x})$.
			Suppose $\ell(\cdot,\cdot)$ is logistic loss.
		\end{enumerate}
		Suppose that the set of optimal neural networks $EQ^*$ is composed of $Q$ equivalence classes.
		For each setting, there exists a constant $c_0 > 0$ that only depends on $\tau$ and $\sigma_0$ such that for
		\begin{align}
		\tilde{\lambda}=
		c_0 M_1
		(m_1 + K \sqrt{m_{\other}})
		\sqrt{\frac{\log n}{n}}
		\left (
		\sqrt{\log Q} + \frac{X_{\max}}{c_{1}}\log (n c_{2}) \sqrt {\log(c_{2} p)}
		\right ).
		\label{eq:tilde_lam}
		\end{align}
		we have for any $T\ge1$,
		\begin{align}
		Pr_{X,Y}\left(\mathcal{T}_{\tilde{\lambda},T}\right) \ge 1 - O\left(\frac{1}{n} \right )
		\label{eq:high_prob}
		\end{align}
		where $c_{1} = 1 - \alpha + \alpha/\sqrt{m_1}$ and $c_{2} = m_1 + K\sqrt{m_{\other}} + X_{\max}/c_{1}$.
	\end{theorem}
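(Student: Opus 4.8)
The plan is to control, uniformly over $\boldsymbol{\eta}\in\Theta$, the weighted empirical process whose boundedness defines $\mathcal{T}_{\tilde{\lambda},T}$: writing
$$
\nu_n(\boldsymbol{\eta}) = (\mathbb{P}_n-\mathbb{P})\!\left(\ell\big(y,f_{\boldsymbol{\eta}^{*(\boldsymbol{\eta})}}(\boldsymbol{x})\big)-\ell\big(y,f_{\boldsymbol{\eta}}(\boldsymbol{x})\big)\right),\qquad
\rho(\boldsymbol{\eta})=\big\|\other(\boldsymbol{\eta})-\other(\boldsymbol{\eta}^{*(\boldsymbol{\eta})})\big\|_2+\sum_{j=1}^p\Omega_\alpha\!\big(\boldsymbol{\theta}_{1,\cdot,j}-\boldsymbol{\theta}_{1,\cdot,j}^{*(\boldsymbol{\eta})}\big),
$$
the event $\mathcal{T}_{\tilde{\lambda},T}$ is exactly $\{|\nu_n(\boldsymbol{\eta})|\le T\tilde{\lambda}(\tilde{\lambda}\vee\rho(\boldsymbol{\eta}))\ \forall\boldsymbol{\eta}\}$. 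Two structural facts are used throughout. First, although $\boldsymbol{\eta}\mapsto\boldsymbol{\eta}^{*(\boldsymbol{\eta})}$ is discontinuous, its range is the finite set of optimal parameterizations, of cardinality at most a $p$-free constant times $Q$ (each equivalence class contributes only its sign-flips and permutations); I would partition $\Theta$ according to which optimal parameterization $\boldsymbol{\eta}^{*(\boldsymbol{\eta})}$ equals, so that on each cell $f_{\boldsymbol{\eta}^{*(\boldsymbol{\eta})}}$ is a \emph{fixed} function and $\boldsymbol{\eta}$ lies in a ball (in the mixed norm defining $\rho$) around a fixed center in $EQ^*$. Second, on $\Theta$ the network output $f_{\boldsymbol{\eta}}$ is uniformly bounded by a $p$-free constant, since $\psi=\tanh$ maps into $[-1,1]^{m_1}$ and $\|\other(\boldsymbol{\eta})\|_2\le K$; hence the loss increment is controlled once $|y|$ is. In the classification setting $y\in\{0,1\}$ already; in the regression setting I would work on $A_n=\{\max_{i\le n}|\epsilon_i|\le c\sqrt{\log n}\}$, which by the sub-gaussian tail and a union bound has probability $1-O(1/n)$ for a suitable $c=c(\tau,\sigma_0)$. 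On $A_n$, $\max_i|y_i|\lesssim\sqrt{\log n}$, so by Condition~\ref{condn:bound_first} the map $z\mapsto\ell(y_i,z)$ is Lipschitz on the (bounded) range of the network with constant of order $M_1\sqrt{\log n}$, producing one of the $\sqrt{\log n}$ factors in \eqref{eq:tilde_lam}.

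For the core estimate I would peel over the size of $\rho$: set $\mathcal{V}_s=\{\boldsymbol{\eta}\in\Theta:\rho(\boldsymbol{\eta})\le 2^s\tilde{\lambda}\}$ for $s=0,1,\dots,S_{\max}$ with $S_{\max}\sim\log n$. On $\mathcal{V}_s\setminus\mathcal{V}_{s-1}$ the denominator is at least $2^{s-1}\tilde{\lambda}$, so it suffices to show $\sup_{\boldsymbol{\eta}\in\mathcal{V}_s}|\nu_n(\boldsymbol{\eta})|\le 2^{s-1}\tilde{\lambda}\cdot T\tilde{\lambda}$ for each $s$, while for $\rho(\boldsymbol{\eta})>2^{S_{\max}}\tilde{\lambda}$ the crude deterministic bound $|\nu_n(\boldsymbol{\eta})|\lesssim\sqrt{\log n}$ (valid on $A_n$ since $f_{\boldsymbol{\eta}}$ is bounded) already beats $2^{S_{\max}-1}\tilde{\lambda}^2 T$ because $\tilde{\lambda}^{-2}$ is of order $n$ up to polylogarithmic factors. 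For a fixed shell, within each cell of the partition the class of loss increments is $\{\ell(y,f^*(\boldsymbol{x}))-\ell(y,f_{\boldsymbol{\eta}}(\boldsymbol{x})):\|\boldsymbol{\theta}_1-\bar{\boldsymbol{\theta}}_1\|_1\le 2^s\tilde{\lambda}/c_1,\ \|\other(\boldsymbol{\eta})-\other(\bar{\boldsymbol{\eta}})\|_2\le 2^s\tilde{\lambda}\}$, using $\Omega_\alpha(\cdot)\ge c_1\|\cdot\|_1$ with $c_1=1-\alpha+\alpha/\sqrt{m_1}$; crucially, because $f^*$ is fixed on the cell, the complexity of this \emph{difference} class scales with the shell radius $2^s\tilde{\lambda}$. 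Symmetrizing and applying the Ledoux--Talagrand contraction in the prediction argument (Lipschitz constant $\lesssim M_1\sqrt{\log n}$ on $A_n$), then Maurer's vector-contraction inequality to pass through the $1$-Lipschitz $\tanh$ and the bounded upper layers, I reduce to the Rademacher complexity of an $\ell_1$-ball of first-layer weights in dimension $m_1 p$ against covariates bounded by $X_{\max}$, which is of order $\tfrac{2^s\tilde{\lambda}}{c_1}X_{\max}\sqrt{\log(m_1 p)/n}$ with an extra $\log(nc_2)$-type factor from truncating the Dudley integral over the entropy $\log N(\delta)\lesssim(r/\delta)^2\log(m_1 p)$, plus a parametric $\ell_2$-ball term of order $2^s\tilde{\lambda}\sqrt{m_{\other}/n}$; the combined upper-layer Lipschitz and dimension constants are absorbed into the prefactor $m_1+K\sqrt{m_{\other}}$, and taking the maximum over the $O(Q)$ cells costs $\sqrt{\log Q}$ (still multiplied by the shell radius). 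I would then upgrade this in-expectation bound to a per-shell high-probability bound via a bounded-differences or Bousquet-type inequality, exploiting the localization that on $\mathcal{V}_s$ the loss increment is bounded by $\sqrt{\log n}$ times the shell radius, so the fluctuation is $\lesssim 2^s\tilde{\lambda}\,\log n/\sqrt{n}$; choosing the per-shell failure probability $O(1/(nS_{\max}))$ and checking that the displayed $\tilde{\lambda}$ in \eqref{eq:tilde_lam} dominates both the mean rate and $\log n/\sqrt{n}$ for every $s$ and every $T\ge1$, a union bound over the $S_{\max}$ shells, the far-$\rho$ regime, and $A_n$ gives \eqref{eq:high_prob}.

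The main obstacle I anticipate is the Rademacher/entropy bound for the network map restricted to an $\ell_1$-ball of first-layer weights: obtaining the dependence on $p$ as $\sqrt{\log p}$ and no worse, while correctly propagating $\tanh$ and the bounded upper layers and tracking every polynomial-in-$(m_1,\dots,m_{L+1},K)$ and polylogarithmic-in-$n$ factor, requires combining the sharp $\ell_1$-ball Rademacher bound, the contraction inequalities, and a carefully truncated chaining argument; essentially all the constants in \eqref{eq:tilde_lam} are generated there. A secondary difficulty is that $\Theta$ is unbounded in the first-layer direction, which forces the peeling to resolution $\sim\log n$ together with the complementary deterministic tail bound; and a minor bookkeeping point is that the discontinuity of $\boldsymbol{\eta}\mapsto\boldsymbol{\eta}^{*(\boldsymbol{\eta})}$ is harmless precisely because its range is finite of cardinality $O(Q)$, which is what lets the difference class have complexity proportional to the shell radius rather than a fixed constant.
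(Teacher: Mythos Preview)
Your overall architecture---peel over dyadic shells in $\rho$, partition according to the nearest optimal parameterization, and reduce on each cell to the complexity of an $\ell_1$-ball in the first-layer weights---matches the paper's proof. The paper differs only in the technical machinery at each step: it splits $\nu_n=\nu_n^{\text{trunc}}+\nu_n^{\text{rem}}$ by explicitly truncating on $\{G(y)\le M_n\}$ with $M_n\asymp\sqrt{\log n}$, bounds the truncated part via an entropy estimate (the convex-hull bound $\log N(u)\lesssim (r/u)^2\log p$ for the $\ell_1$-ball, then Lemma~3.2 of van~de~Geer), and handles the remainder by Markov plus the sub-gaussian tail; the peeling runs to $j=\infty$ because the per-shell probability already decays like $\exp(-c\,2^{2j})$, so no separate far-$\rho$ argument is needed. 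Your route through Ledoux--Talagrand and Maurer contraction followed by a Bousquet-type deviation bound is a legitimate alternative that yields the same $\sqrt{\log p}$ dependence and avoids computing the Dudley integral directly.

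There is one point that needs tightening. Working ``on $A_n$'' is not the same as having a.s.\ bounded summands: conditioning on $A_n=\bigcap_i\{|\epsilon_i|\le c\sqrt{\log n}\}$ changes the centering in $\nu_n=(\mathbb{P}_n-\mathbb{P})g_{\boldsymbol{\eta}}$, since $\mathbb{P}$ is the \emph{unconditional} law. To apply bounded-differences or Bousquet you must either (i) truncate the functions themselves, $g_{\boldsymbol{\eta}}\mathbf{1}\{G(y)\le M_n\}$, and then control the leftover $(\mathbb{P}_n-\mathbb{P})[g_{\boldsymbol{\eta}}\mathbf{1}\{G(y)>M_n\}]$ uniformly in $\boldsymbol{\eta}$---this is exactly the paper's remainder lemma, using $|g_{\boldsymbol{\eta}}|\lesssim(|y|+C)\rho(\boldsymbol{\eta})$ and the sub-gaussian tail---or (ii) condition and then bound $\sup_{\boldsymbol{\eta}}|\mathbb{P}g_{\boldsymbol{\eta}}-\mathbb{P}(g_{\boldsymbol{\eta}}\mid|\epsilon|\le c\sqrt{\log n})|$, which reduces to the same estimate. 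Either way this extra step should be stated explicitly. A minor remark: the paper sharpens the partition count from $Q\times\prod_a 2^{m_a}m_a!$ down to exactly $Q$ by observing that a sign-flip/permutation maps $\Theta_{r,EQ(\boldsymbol{\eta}^*)}$ into $\Theta_{r,\boldsymbol{\eta}^*}$ without changing $g_{\boldsymbol{\eta}}$; your cruder count still works since the extra factor is $p$-free, but you may want the cleaner $\sqrt{\log Q}$.
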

	
	We now combine Theorems~\ref{thm:sparse_nn} and \ref{thm:emp_proc} to obtain a final bound on the excess loss and norm of the irrelevant weights.
	Here we focus primarily on the role of $p$ rather than the role of the number of layers and hidden nodes since in high-dimensional settings, we find shallow networks typically have better performance.
	In the above results, the effect of the number of layers and hidden nodes in the upper layers are encapsulated in the constants $G$ and $M_1$, which are used to bound the first and third derivatives of the loss over the entire space of $\Theta$.
	(It is easy to upper-bound these constants with a rate that grows exponentially in the number of hidden layers, which is similar to previous results \citep{Anthony2009-hr}.)
	Thus to combine results, define $M_2 = G M_1 (\sqrt{m_{\other}} + m_1)$, which depends on the number of hidden nodes at the hidden layers, $|S|$, $f^*$, $K$ and $X_{\max}$ but not $p$.
	If the identifiability constant is not too small, Theorems~\ref{thm:sparse_nn} and \ref{thm:emp_proc} state that if $\alpha \in (0,1)$ and  $\tilde{\lambda}$ is chosen according to \eqref{eq:tilde_lam}, the excess loss converges at the rate
	$$
	O_p\left(
	n^{-1} M_2^2 m_1^{5/2} |S|^{5/2} \log p
	\right)
	$$
	and the $\ell_1$-norm of the irrelevant weights converges at the rate
	$$
	O_p\left(
	n^{-1/2} M_2 m_1^{5/2} |S|^{5/2}  \sqrt{\log p}
	\right)
	$$
	modulo log terms that do not depend on $p$.
	Thus we find that the total number of features $p$ only enters these rates in a log-term.
	
	To the best of our knowledge, our results are the first to bound the convergence rate of the norm of irrelevant weights.
	Though our bounds might not be tight, they help us understand why we observe better performance in sparse-input neural networks compared to standard neural networks in our simulation studies and data analyses.
	
	We hope to improve these convergence rate bounds in future research.
	In particular, we would like to shrink the exponent on $m_1$ and $|S|$ since these rates become very slow for moderate values of $m_1$ and $|S|$.
	However we find that in practice, the number of effective nodes in the first hidden layer is controlled partly by the shrinkage behavior of the lasso since the lasso encourages sparsity both in the inputs as well as the first hidden layer.
	Thus the convergence rates are likely faster when the true function $f^*$ is closely approximated by a network with a small number of hidden nodes in the first layer.
	Also, our bound depends on the gradient of the loss of the networks in the model class.
	It seems possible to tighten the bounds if we can show that the gradient of the fitted network is similar to that of the optimal neural network.
	Intuitively, if the optimal neural network is smooth, i.e. the magnitude of these derivatives are small, then the fitted network by optimizing the penalized log likelihood is also likely to be smooth.

	\section{Simulation study}
	\label{sec:simulations}
	
	In this section, we present simulations to understand various facets of these sparse-input neural networks.
	First, we compare empirical convergence rates to our theoretical convergence rates to understand if our rates hold and how tight they are.
	Second, we would like to understand the roles of the lasso vs. group lasso penalties and how they interact with each other.
	Finally, we compare SPINN to other methods in various settings.
	
	For all of the simulations, we generated the data according to the model
	$ y = f^*(\boldsymbol{x}) + \sigma \epsilon $ where $\epsilon \sim N(0,1)$ and $\sigma$ is scaled such that the signal to noise ratio is 2.
	We sampled each covariate $x_j \in \mathbb{R}$ independently from the standard uniform distribution.
	For the simulations, the true function $f^*$ will only depend on the first $|S|$ features, though the number of covariates $p$ may be much larger.
	Throughout the simulations, the mean squared error is defined as $E[(f^*(x) - f_{\hat{\boldsymbol{\eta}}}(x))^2]$ since the true model is known.
	
	\subsection{Confirming the convergence rates}
	Here we aim to understand if our convergence rates hold in practice.
	Recall that our theoretical results depend on a number of assumptions.
	We assumed that the fitting procedure finds the global minima, even though this is difficult to show in practice since our optimization problem is not convex.
	In addition, Conditions~\ref{cond:hess} and \ref{cond:identifiable} assumed that the behavior of the expected loss depended on constants $h_{\min}$ and $\alpha_\epsilon$ that did not depend on $p$.
	
	As our theoretical results only bound the excess loss, we would like an experimental setup that is not affected by the approximation error.
	Here we set the true function to be a neural network with one hidden layer and four hidden nodes:
	\begin{align*}
	f^*(\boldsymbol{x}) &= \tanh(x_1 + 2x_2 - 3x_3 + 2x_4)  + 2\tanh(x_1 - 2 x_5 + 2 x_6)\\
	& + \tanh(-x_2 - x_3 - x_6) + \tanh(x_5 - 0.5 x_3 + 0.5 x_6).
	\end{align*}
	Thus the excess loss is $E[(f^*(x) - f_{\hat{\boldsymbol{\eta}}}(x))^2]$.
	
	First we check the behavior of SPINN as we grow the number of observations.
	Here the fitted neural networks have the same structure as the optimal neural network and we keep the parameter $\alpha$ fixed.
	We estimate the convergence rate of the excess loss by performing linear regression between $\log(\log n/n)$ and the logarithm of the excess loss.
	Our oracle inequality suggests that the estimated coefficient should be 1.
	Indeed, we estimate the slope to be 1.03 (Figure~\ref{fig:conv_obs} a, left).
	Similarly, we can check the convergence rate of the sparse group lasso penalty of the irrelevant weights.
	Plotting the sparse group lasso penalty of the irrelevant weights against $\sqrt{\log n/n}$, we see that a near-linear relationship only begins when there are at least 400 observations (Figure~\ref{fig:conv_obs} a, right).
	Therefore we fit a linear model between $\log(\log n/n)$ and the logarithm of the weights of the irrelevant weights for the simulations with at least 400 observations.
	Our oracle inequality suggests that the estimated coefficient should be 0.5 and we estimate that the slope is 0.58, which is slightly faster than our theoretical bound.
	
	Next we check that convergence rate with respect to the number of irrelevant inputs.
	As before, the fitted networks have the same structure are the optimal network.
	Here the number of training observations remains constant ($n = 200$).
	To understand how the excess loss grows with the number of features, we perform linear regression of the excess loss against the number of covariates $p$ and its logarithm.
	According to our theoretical results, the excess loss should grow with the logarithm of $p$, not $p$ itself.
	We estimate a coefficient of 0.0003 for $p$ and a coefficient of 0.031 for $\log p$ (Figure~\ref{fig:conv_obs} b, left).
	Similarly, we check how the weight of the irrelevant weights grow with the number of features, we regress the weight against the number of covariates $p$ and the square root of the logarithm of $p$.
	We estimate a coefficient of -0.0007 for $p$ and a coefficient of 0.190 for $\sqrt{\log p}$ (Figure~\ref{fig:conv_obs} b, right).
	From these results, we conclude that our convergence rates with respect to the number of features and the number of observations appears tight.
	
	Finally, we would like to understand how the number of hidden nodes affects estimation error and discrimination of the relevant vs. irrelevant nodes.
	Recall that our derived rates depend on large powers of $m_1$, the number of hidden nodes in the first layer; however we suspect that our upper bounds are not tight with respect to $m_1$.
	We keep the number of observations and features constant ($n = 200, p = 50$), but grow the number of nodes in the single hidden layer.
	Surprisingly, we find that the excess loss is relatively constant as the number of hidden nodes grows (Figure~\ref{fig:conv_obs} c, left).
	Likewise, we find that the weight of the irrelevant nodes does not grow with the number of hidden nodes (Figure~\ref{fig:conv_obs} c, right).
	The most likely explanation is that the lasso encourages sparsity in the number of hidden nodes in the first hidden layer since it encourages weights to be set to zero.
	In the empirical results in the following sections, we find that SPINN can often zero out a large proportion of hidden nodes.

	\begin{figure}
		\caption{Simulation results investigating convergence rates and effects of penalty parameters}
		\label{fig:conv_obs}
		\vspace{-0.3in}
		\begin{center}
		\begin{tabular}{ll}
			\multicolumn{2}{p{\textwidth}}{(a) Effects of varying number of observations ($n= 100, 200, ... , 3200$). 
			}\\
			\includegraphics[width=0.35\textwidth]{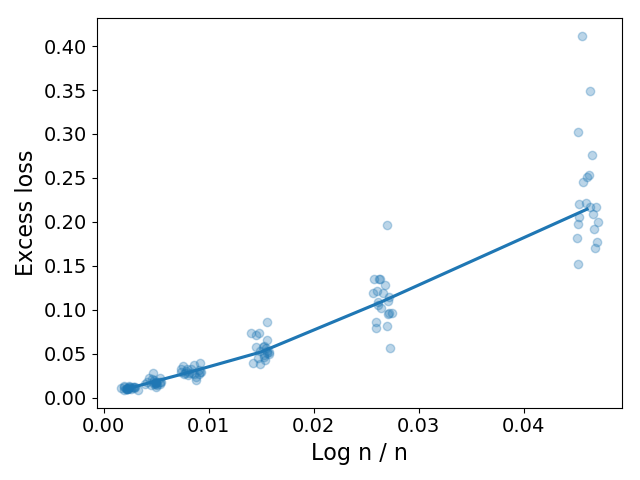}
			&
			\includegraphics[width=0.35\textwidth]{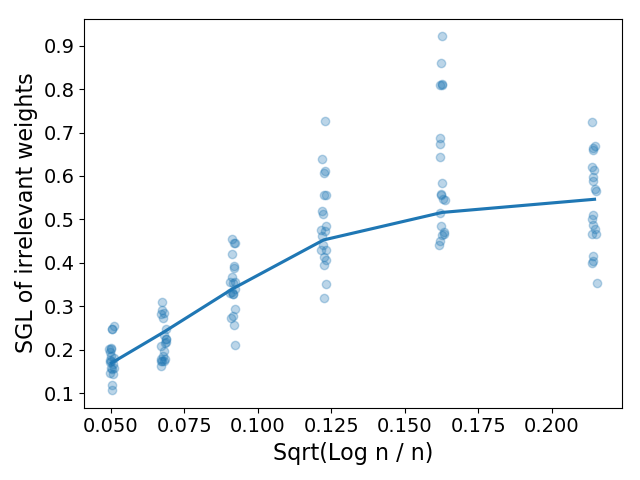}
			\\
			\multicolumn{2}{p{\textwidth}}{(b) Effects of varying number of covariates ($p= 25, 50, ..., 400$).
			}\\
			\includegraphics[width=0.35\textwidth]{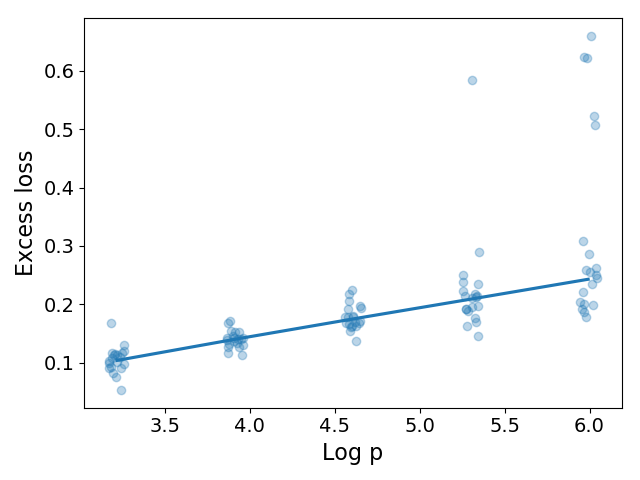}
			&
			\includegraphics[width=0.35\textwidth]{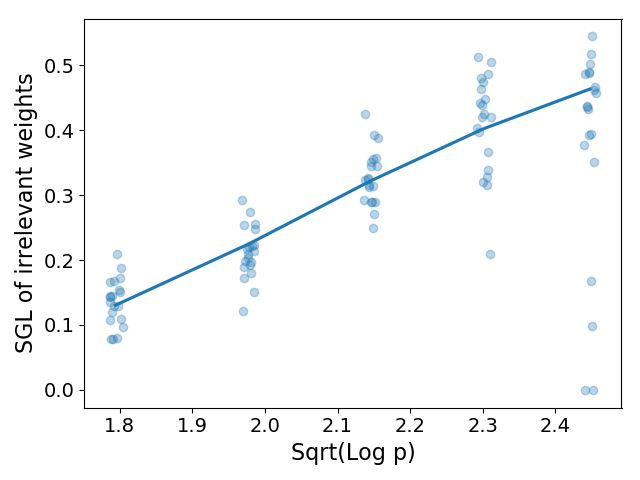}
			\\
			\multicolumn{2}{p{\textwidth}}{(c)  Effects of varying number of hidden nodes ($m_1 = 4, 8, 12, 16$).}\\
			\includegraphics[width=0.35\textwidth]{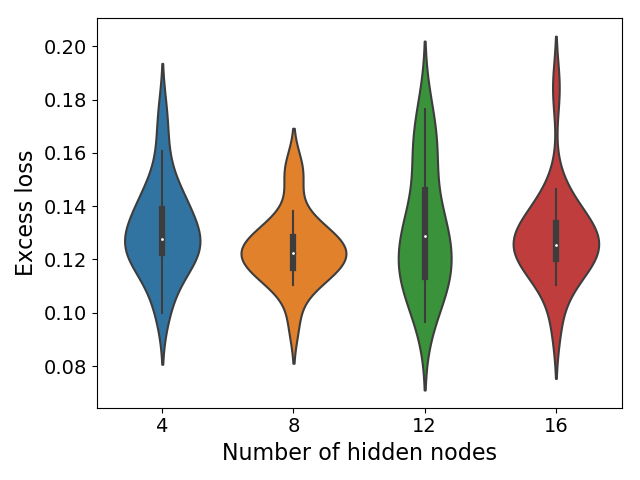}
			&
			\includegraphics[width=0.35\textwidth]{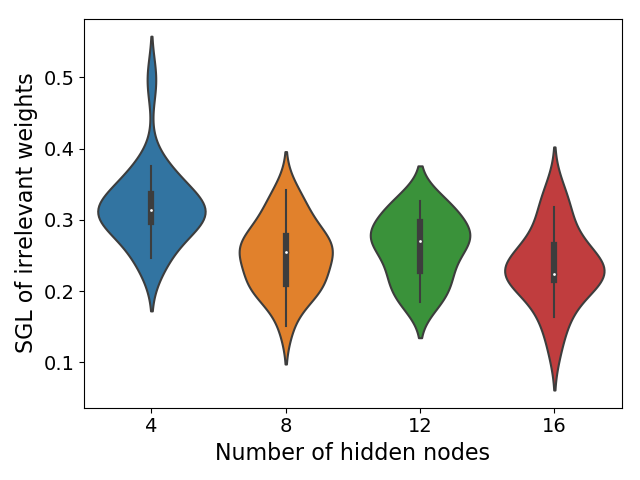}
			\\
			\multicolumn{2}{p{\textwidth}}{
				(d) Effects of varying the lasso and group lasso penalty parameters. The white box in the right and left of each plot corresponds to the penalty parameter setting that has the lowest mean squared error when upweighting the group lasso and the lasso, respectively.
				The heatmaps plot the mean squared error (left), the proportion weight assigned to relevant nodes (middle), and the proportion weight assigned to the irrelevant nodes (right).}\\
			\multicolumn{2}{c}{
				\includegraphics[height=0.29\textwidth]{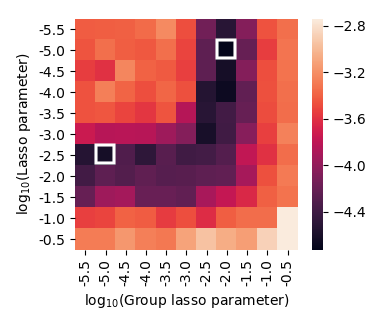}
				\includegraphics[height=0.29\textwidth]{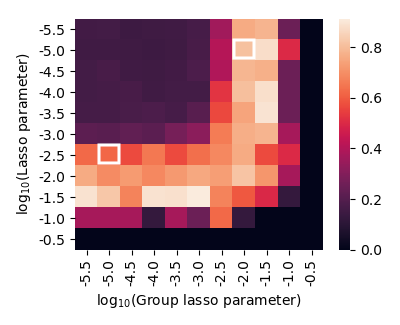}
				\includegraphics[height=0.3\textwidth]{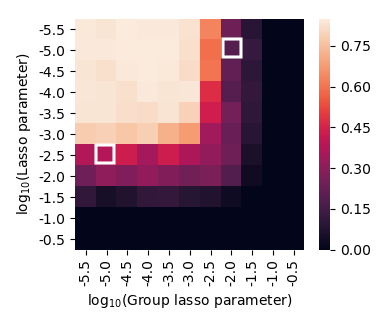}
			}
		\end{tabular}
	\end{center}
	\end{figure}
	
	\subsection{The role of lasso vs. group lasso}
	The $\alpha$ parameter in the sparse group lasso balances the shrinkage behavior of the lasso and the group lasso.
	To understand the effects of the lasso and group lasso on the behavior of SPINN, we performed a simulation study where we varied their penalty parameters (Figure~\ref{fig:conv_obs}d).
	Here the true function is
	\begin{align*}
	f^*(x) & = \sin(x_1 (x_1 + x_2)) \cos(x_3 + x_4 x_5)\sin(e^{x_5} + e^{x_6} - x_2).
	\end{align*}
	We generate $50$ covariates (where only the first six are relevant), and 250 observations.
	
	There is a near symmetry in using the lasso and the group lasso penalties.
	The mean squared error (MSE) is small when either the lasso is large and the group lasso is small or the lasso is small and the group lasso is large.
	For both scenarios, the weights tied to the relevant inputs are large and those tied to the irrelevant ones are small.
	The MSE is large when the lasso and the group lasso are both large or are both small; this is unsurprising since the former means that we are over-penalizing and the latter means that we are under-penalizing.
	
	Nonetheless there is still a difference between the two penalties: upweighting the lasso leads to a fitted model with a larger proportion of the weights on the irrelevant nodes and a smaller proportion of the weights on the relevant nodes.
	Since the group lasso penalty is better able to discriminate between inputs, upweighting the group lasso results in models with slightly smaller mean squared error (MSE = 0.009 vs. 0.010).
	
	\subsection{Comparing against other methods}
	We now present a simulation study to understand how sparse-input neural networks compare against other methods.
	We consider scenarios where the true function is the sum of univariate functions, a complex multivariate function, and a function that is in between these two extremes.
	In all cases, the true function is sparse and only depends on the first six variables.
	We compare sparse-input neural networks to the following methods:
	\begin{itemize}
		\item neural network with a single hidden layer and a ridge penalty on all the weights (ridge-only neural network);
		\item Sparse Additive Model (SpAM), which fits an additive univariate model with a sparsity-inducing penalty \citep{ravikumar2007spam}.
		\item random forests \citep{Breiman2001-fc}
		\item oracle ridge-only neural network involving only the first six covariates;
		\item oracle additive univariate model of the form
		$
		\sum_{i=1}^6 g_i (x_i)
		$
		where $g_i$ are fit using additive smoothing splines;
		\item oracle general multivariate model $g (x_1, ..., x_6)$ where $g$ is fit using a 6-variate smoothing spline.
	\end{itemize}
	The last three methods are oracles since they take only the relevant covariates as input.
	These oracle methods are not competitors in practice since they use information which will not be available; however, they give us some idea of how well our feature selection is working. Performance of the methods is assessed by the mean squared error, evaluated over 2000 points drawn from $\mathbb{P}_{X}$.
	
	We generated a training set of varying sizes and a test set of 2000 observations.
	The penalty parameters in all the methods were tuned using 3-fold cross validation.
	When tuning the hyper-parameters in SPINN, we considered up to three hidden layers and up to fifteen hidden nodes per layer.
	We use $\tanh$ as the activation function in our neural networks.
	Each simulation was repeated 20 times.
	
	\subsubsection{Additive univariate model}
	\label{sec:additive}
	In the first scenario, we have $p = 50$ covariates and the true function is the sum of univariate functions
	$$
	f^*(x)= \sin(2 x_1) + \cos(5 x_2) + x_3^3 - \sin(x_4) + x_5 - x_6^2.
	$$
	Since the true model is additive, we expect that the additive univariate oracle performs the best, followed by SpAM. As shown in Figure~\ref{fig:simulations} a, we see that this is indeed the case.
	
	We find that sparse-input neural networks also perform quite well. Thus if we are unsure if the true function is the sum of univariate functions, a sparse-input neural network can be a good option. In addition, we notice that the performance of sparse-input neural networks tends to track the oracle neural network and the multivariate oracle. In small sample sizes, sparse-input neural networks and oracle neural networks perform better than the multivariate oracle, as there is not enough data to support fitting a completely unstructured $6$-variate smoother. As the number of samples increase, the multivariate oracle overtakes the sparse-input neural network since it knows which features are truly relevant. We observe similar trends in the next two scenarios.
	The ridge-only neural network and random forests perform poorly in this scenario since they are unable to determine which variables are relevant.
	
	Since the true model is an additive univariate model, a neural network that closely approximates $f^*$ should capture this additive structure by estimating a separate network for each covariate and then output a linear combination of their outputs.
	As such, the nodes in the first hidden layer of this ``additive'' neural network should be very sparsely connected: each of these nodes should be connected to exactly a single input node.
	We were not able to recover this behavior in the fitted networks using SPINN due to the small sample size and because neural networks tend to prefer fitting models with interactions rather than univariate additive models.
	Nonetheless, the lasso did seem to encourage fitting models where the hidden nodes were only connected to a small number of input nodes.
	For example, in a simulation replicate with 2000 training observations where 43 features were included in the fitted model, some hidden nodes had 36 nonzero incoming edges whereas others had 43 nonzero incoming edges.
	
	\begin{figure}
		\begin{tabular}{ccc}
			\includegraphics[width=0.35\textwidth]{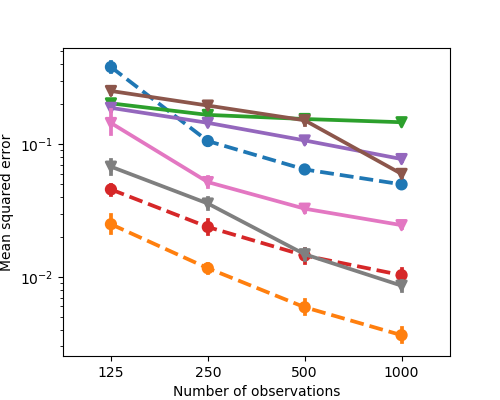}&
			\includegraphics[width=0.35\textwidth]{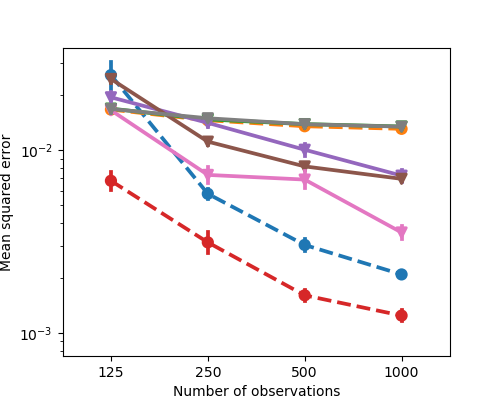}&
			\includegraphics[width=0.35\textwidth]{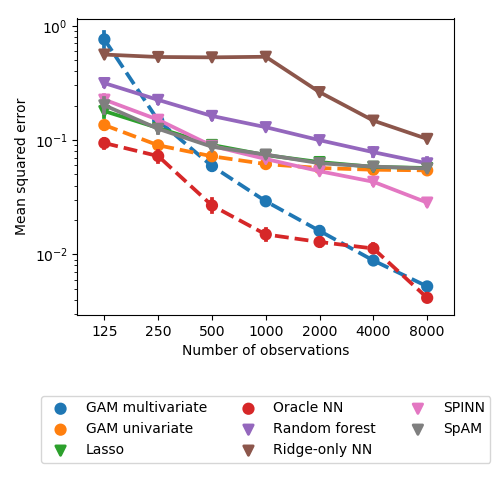}\\
			(a) Additive Univariate & (b) Complex Multivariate & (c) High-dimensional
	\end{tabular}
		\caption{Simulation results from the three scenarios: additive univariate function with $p = 50$ (a), a complex multivariate function $p = 50$ (b), and the sum of multivariate functions with $p= 1000$ (c). The dashed lines and triangles indicate oracle models.}
		\label{fig:simulations}
	\end{figure}
	
	\subsubsection{Complex multivariate model}
	
	In the second simulation, we use $p = 50$ covariates and a sparse, multivariate generative function
	\begin{align*}
	f^*(x) & = \sin(x_1 (x_1 + x_2)) \cos(x_3 + x_4 x_5)\sin(e^{x_5} + e^{x_6} - x_2).
	\end{align*}
	Here we expect the general multivariate oracle to perform the best in large sample sizes, which is confirmed in Figure~\ref{fig:simulations} b. Similar to results in Section~\ref{sec:additive}, the performance of sparse-input neural networks more closely tracks the trajectory of oracle neural networks and the general multivariate oracle. As expected, the additive univariate oracle and SpAM perform very poorly. Their MSEs flatten out very quickly due to the bias from assuming an additive univariate model. In fact, we see that given a sufficiently large training set, even the ridge-only neural network and random forests can outperform the additive univariate methods.
	
	\subsubsection{High-dimensional additive multivariate model}
	
	Finally we consider a setting where we have a large number of input features, $p = 1000$. We use a regression function that is the sum of 3- and 4-variate functions:
	\begin{align*}
	f^*(x) & = (x_1 \wedge x_2) \cos(1.5 x_3 + 2 x_4) + e^{x_5 + \sin(x_4)} x_2 + \sin(x_6 \vee x_3)) (x_5 - x_1).
	\end{align*}
	This places it between the simple additive univariate function in the first scenario and the complex 6-variate function in the second scenario.
	
	The results (Figure~\ref{fig:simulations} c) are a mixture of the results from the previous two simulation setups: SpAM and sparse-input neural networks perform similarly in small samples and diverge at larger sample sizes; and random forests and ridge-penalized neural networks perform poorly.
	These results illustrate that the utility of SPINN depends on the problem.
	Since $f^*$ is the sum of low-dimensional functions, models that ignore interactions can still perform well when the number of observations is much smaller than the number of features; SPINN is more useful in this problem as it becomes more moderate-dimensional.
	
	\section{Data analyses}
	\label{sec:data}
	
	We now compare SPINN to the other methods on several high- and moderate-dimensional datasets.
	The first task is to predict phenotypes given gene expression microarray data.
	The second task is to predict binding affinity between peptides and protein receptors.
	Employing nonparametric techniques in these problems may improve performance since the biological processes are typically believed to involve complex higher-order interactions between various components.
	
	\subsection{Predicting phenotypes from gene expression data}
	Here we consider two very high-dimensional problems involving gene expression data.
	The first dataset is a regression problem where our goal is to predict the production rate of riboflavin in \textit{Bacillus subtilis} given gene expression profiles of $p=4088$ genes in $n = 71$ experimental settings \citep{buhlmann2014high}.
	The second dataset entails predicting the molecular type of samples (NEG or not NEG) collected from $n = 128$ patients with acute lymphoblastic leukemia \citep{Chiaretti:2004gq, ALL}.
	For each subject, $p = 12625$ gene expression levels were measured using Affymetrix human 95Av2 arrays and normalized using the robust multichip average method.
	Based on pre-tuning the network structure, we cross-validated over networks with one or two hidden layers where the first layer had ten nodes and the second layer had up to four nodes for the first dataset.
	For the second dataset, we only considered a single hidden layer where the first layer had three or ten nodes.
	
	For both datasets, we randomly split the dataset such that test set makes up one-fifth of the data and the rest are for training.
	We compare the methods based on the average performance across 30 random splits of the dataset (Table~\ref{table:ribo}).
	We also evaluate how many features are included in the fitted model for the various methods.
	For a neural network, we consider a feature to be included in the fitted model if the norm of the weights connected that that feature is nonzero.
	
	The top performing methods on these two datasets are sparse-input neural networks and simple linear/logistic models employing the lasso penalty.
	SPINN achieves the best estimated error in the first dataset and the two methods are not significantly different in the second.
	This is particularly impressive as these datasets are extremely high-dimensional and ``common knowledge'' has strongly pushed against using neural networks in such settings, as evidenced by the poor performance of the standard ridge-penalized network.
	These experiments suggest that SPINN is promising for analyzing gene expression datasets, particularly with the recent push to generate larger genomic datasets \citep{Collins2015-kz}.
	
	{\setstretch{1.0}
	\begin{table}
		\caption{
			\label{table:ribo}
			Average performance of the different methods for predicting riboflavin production rates in \textit{Bacillus subtilis} (top) and classification of Leukemia samples (bottom).
			The ``\# feats incl.'' indicates the number of features included in the fitted model.
			Standard error given in parentheses.
		}
		\centering
		\begin{tabular}{lll}
			\multicolumn{3}{c}{\textit{Predicting riboflavin production rates}}\\
			Method & Mean squared error & \# feats incl. \\ \midrule
			Sparse-input NN & 0.116 (0.009)  & 45.1 (1.6) \\
			Ridge-penalized NN  & 0.145 (0.014) & 4088 (0) \\
			SpAM             & 0.148 (0.015)   & 32.9 (2.1) \\
			Linear model with Lasso & 0.128 (0.009) & 19.9 (1.26) \\
			Random forest & 0.206 (0.020) & --\\
			\toprule
			\multicolumn{3}{c}{\textit{Classification of Leukemia samples}}\\
			Method & Classification error & \# feats incl. \\ \midrule
			Sparse-input NN & 0.149 (0.012) & 43.7 (2.1) \\
			Ridge-penalized NN  & 0.183 (0.015) & 12625 (0) \\
			SpAM             & 0.231 (0.015) & 47.1 (4.2)  \\
			Logistic regression with Lasso & 0.141 (0.011) & 44.1 (2.2) \\
			Random forest & 0.179 (0.011) & --
		\end{tabular}
	\end{table}
}
	
	\subsection{Peptide-MHC binding affinity prediction}
	
	Here we apply our method to predict the binding affinity between peptides and class I major histocompatibility complexes (MHCs).
	Class I MHCs are receptors on the surface of nearly every nucleated cell in the body and bind to peptides (typically eight to eleven amino acids) derived from intracellular proteins.
	These peptide-MHC complexes help the immune system monitor the health of each cell and eliminate infected or tumoral cells.
	Peptides that cannot be bound by an individual's MHC are ``invisible'' to the T cell component of their immune system, thus understanding the binding affinity between the peptide and MHC is important developing vaccines and immunotherapies against cancer.
	
	Each MHC molecule can only bind to a specific subset of peptides and the binding affinity depends on how amino acids interact.
	As such, the most accurate algorithms for predicting binding affinity are fully nonparametric \citep{Andreatta2016-bk, Jurtz2017-xm, ODonnell2018-pn, Boehm2018-di} and the start-of-the-art method employs ridge-penalized neural networks \citep{Trolle2015-lc, Zhao2018-za}.
	Here we apply our method to see if incorporating the sparse group lasso can further improve prediction accuracy.
	
	Since most peptides that bind to MHC are composed of nine amino acids (AAs), we will only consider peptides of this length in this analysis.
	We used an ``allele-specific'' approach where separate predictors are trained for each MHC allele.
	We trained the models on quantitative binding affinity data available from \url{https://github.com/openvax/mhcflurry}, which was collated from the Immune Epitope database (IEDB) \citep{Vita2015-ka} and a published benchmark \citep{Kim2014-iq}.
	For each amino acid, we generated a one-hot vector as well as the following physical features: Kidera factors \citep{Kidera1985-wg} and features generated in ForestMHC \citep{Boehm2018-di} (hydropathy score, molar mass, and whether or not the amino acid was aromatic).
	In total, each input amino acid was represented by a 33-dimensional vector and each peptide was represented by 297 covariates.
	As there are many MHC alleles, we sampled three alleles to benchmark performance: HLA-A*01:01, HLA-B*44:02, and HLA-B*08:02.
	These were chosen to represent datasets with different number of observations.
	We considered neural network architectures with one or two hidden layers, where the first layer had 45 nodes and the second layer had 5 to 20 nodes.
	We split the dataset into a training and test set in the same manner as before.
	
	SPINN performed significantly better for two of the three HLA alleles and tied with random forests for allele HLA-B*44:02 (Table~\ref{table:binding}).
	Since the peptide-MHC binding affinity depends on interactions between amino acids, it is unsurprising that Lasso and SpAM performed poorly.
	Given that there was a significant improvement over the ridge-penalized neural network, we believe that incorporating a sparse group lasso penalty can significantly improve existing peptide-MHC binding affinity prediction methods.
	
	For each HLA allele, we found that the estimated SPINN model used at least one generated feature from each position.
	For each position in the peptide, the nonzero features typically included a subset of the Kidera factors, some elements from the one-hot vector, and a subset of the remaining features from ForestMHC.
	By modeling interactions between input features, SPINN tended to employ more input features than the additive models SpAM and Lasso.
	Nonetheless, SPINN zeroed-out many features; even for HLA-A*01:01 which had the most observations, the fitted model only used an average of 154 of the 297 features.
	
	{\setstretch{1.0}
	\begin{table}
		\caption{
			\label{table:binding}
			Average performance of peptide-MHC binding prediction for three HLA alleles over 30 replicates.
			The ``\# feats'' indicates the number of features included in the fitted model.
			MSE = mean squared error.
			Standard error given in parentheses.
		}
		\centering
		\hspace{-.5in}
		\begin{tabular}{l|cc|cc|cc}
			& \multicolumn{2}{c}{HLA-A*01:01}& \multicolumn{2}{|c}{HLA-B*44:02} & \multicolumn{2}{|c}{HLA-B*08:02} \\
			& \multicolumn{2}{c}{$n = 3370$}& \multicolumn{2}{|c}{$n = 1430$} & \multicolumn{2}{|c}{$n = 487$} \\
			Method & MSE &  \# feats & MSE & \# feats & MSE & \# feats \\ \midrule
			SPINN & \textbf{0.147} (0.002) & 154 (2)& \textbf{0.193} (0.007) & 116 (3) & \textbf{0.17} (0.01) & 97 (3) \\
			Ridge NN & 0.249 (0.022) & 297 (0)  & 0.268 (0.022) & 297 (0) & 0.25 (0.03)  & 297 (0) \\
			SpAM         & 0.247 (0.003) & 63 (1)  & 0.262 (0.012) & 59 (2) & 0.24 (0.01) & 33 (2) \\
			Lasso &  0.247 (0.003) & 107 (2) & 0.257 (0.006) & 96 (4) & 0.26 (0.01) & 67 (2) \\
			Random forest & 0.161 (0.003) & -- & \textbf{0.191} (0.005) & -- & 0.21 (0.01) & --
		\end{tabular}
	\end{table}
}
	
	\section{Discussion}
	
	We have introduced the use of sparse-input neural networks as a nonparametric regression and classification method for high-dimensional data, where the first-layer weights are penalized with a sparse group lasso.
	When the true model is best approximated by a sparse network, we show that the fitted model using the sparse group lasso has a prediction error that grows logarithmically with the number of features.
	Thus sparse-input neural networks can be effective for modeling high-dimensional data.
	We have also provided empirical evidence via simulation studies and data analyses to show that sparse-input neural networks can outmatch other more traditional nonparametric methods in high dimensions.
	Our results show that neural networks can indeed be applied to high-dimensional datasets, as long as proper regularization is applied.
	
	One possible direction of further research is to understand how sparsity-inducing penalties in upper-layer weights can be used to tune the size of the network and how it affects prediction accuracy.
	If the size of such networks can be controlled by a small number of penalty parameters, then this could potentially reduce the number of hyper-parameters that need to be tuned.
	
	One drawback of sparse-input neural networks, and neural networks in general, is that they require a significant amount of time to train.
	Much of the training time is spent on tuning the hyper-parameters and testing different initializations since the objective function is non-convex.
	On the other hand, other methods like sparse additive models and random forests are much faster to fit.
	The advantage of SPINN over additive models is most apparent when modeling complex higher-order interactions drastically improves performance; the advantage of SPINN over other fully nonparametric methods is most apparent when the dataset is very high-dimensional but the number of observations is relatively small.
	
	The code for fitting sparse-input neural networks is available at \url{http://github.com/jjfeng/spinn}.
	
	\section*{Acknowledgments}
	We thank Frederick A. Matsen IV for helpful discussion and suggestions.

\pagebreak
\appendix
\section{Proofs}

\subsection{Proof of Theorem~\ref{thm:sparse_nn}}
The proof for Theorem~\ref{thm:sparse_nn} is composed of two main steps. First, we show that the excess loss of any $f_{\boldsymbol{\eta}}$ is lower bounded by a quadratic function of the distance from $\boldsymbol{\eta}$ to $EQ^*$.
We combine this with the definition of $\hat{\boldsymbol{\eta}}$ and $\mathcal{T}_{\tilde{\lambda}, T}$ to derive the result. 

Using Conditions~\ref{cond:hess}, \ref{cond:identifiable}, and \ref{condn:bound_third}, we lower bound the excess loss by a quadratic function.
Let $m^*$ be the number of parameters in the network excluding the ones connected to the irrelevant weights.
\begin{lemma}[Quadratic lower bound]
	\label{lemma:quad_margin}
	Suppose Conditions~\ref{cond:hess}, \ref{cond:identifiable}, \ref{condn:bound_third} hold.
	For some constant $R > 0$, suppose $\boldsymbol{\eta} \in \Theta$ satisfies
	\begin{align}
	\| \boldsymbol{\eta}_S - \boldsymbol{\eta}^{*(\boldsymbol{\eta})} \|_2 \le R
	\label{eq:quad_cond_bound}
	\end{align}
	and 
	\begin{align}
	\|\boldsymbol \theta_{S^{c}}\|_{1} \le 
	c \left(
	\left\Vert \other(\boldsymbol \eta)- \other \left( \boldsymbol \eta^{*(\boldsymbol \eta)} \right ) \right\Vert _{2}
	+ 3 \sum_{j \in S}
	\Omega_\alpha \left ( \boldsymbol \theta_{1,\cdot, j} - \boldsymbol \theta_{1,\cdot, j}^{*(\boldsymbol \eta)} \right )
	\right)
	\label{eq:quad_cond}
	\end{align}
	for some $c > 0$.
	Then we have
	$
	\mathcal{E}\left(\boldsymbol \eta\right)\ge  \| \boldsymbol{\eta}_S - \boldsymbol{\eta}^{*(\boldsymbol{\eta})} \|_2^2/C_{0}^{2}
	$
	where
	\begin{align}
	C_{0}^{2} & = \frac{1}{\epsilon_{0}} \vee \frac{R^{2}}{\chi_{\epsilon_{0}}},\\
	\epsilon_{0} & =\frac{h_{min}}{C_1 G c^3 \left(
		(1 - \alpha) \sqrt{m_1|S|}  + \alpha \sqrt{|S|}
		+ \sqrt{m^*}
		\right)^3}
	\end{align}
	for some constant $C_1 > 0$.
\end{lemma}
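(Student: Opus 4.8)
\emph{Proof proposal.} Write $d := \| \boldsymbol{\eta}_S - \boldsymbol{\eta}^{*(\boldsymbol{\eta})} \|_2$ and $\boldsymbol{\Delta} := \boldsymbol{\eta} - \boldsymbol{\eta}^{*(\boldsymbol{\eta})}$; recall from \eqref{eq:excess} that $\mathcal{E}(\boldsymbol{\eta}) = \mathbb{P}\ell(y,f_{\boldsymbol{\eta}}(\boldsymbol{x})) - \mathbb{P}\ell(y,f_{\boldsymbol{\eta}^{*(\boldsymbol{\eta})}}(\boldsymbol{x}))$. The plan is to lower bound $\mathcal{E}(\boldsymbol{\eta})$ by $d^2/C_0^2$ by splitting on whether $d \le \epsilon_0$, where a local Taylor argument applies, or $\epsilon_0 < d \le R$, where the global identifiability Condition~\ref{cond:identifiable} applies. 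A preliminary step is to convert hypothesis \eqref{eq:quad_cond} into an $\ell_1$-to-$\ell_2$ comparison $\|\boldsymbol{\Delta}\|_1 \le A\,d$: using $\Omega_\alpha(\boldsymbol{v}) \le \big((1-\alpha)\sqrt{m_1} + \alpha\big)\|\boldsymbol{v}\|_2$ on $\mathbb{R}^{m_1}$, Cauchy--Schwarz over the $|S|$ summands of $\sum_{j \in S}\Omega_\alpha(\boldsymbol{\theta}_{1,\cdot,j} - \boldsymbol{\theta}_{1,\cdot,j}^{*(\boldsymbol{\eta})})$, and the fact that both $\other(\boldsymbol{\eta}) - \other(\boldsymbol{\eta}^{*(\boldsymbol{\eta})})$ and $\boldsymbol{\theta}_{1,\cdot,S} - \boldsymbol{\theta}_{1,\cdot,S}^{*(\boldsymbol{\eta})}$ are sub-blocks of $\boldsymbol{\eta}_S - \boldsymbol{\eta}^{*(\boldsymbol{\eta})}$ (so each has $\ell_2$-norm at most $d$), \eqref{eq:quad_cond} gives $\|\boldsymbol{\theta}_{1,\cdot,S^c}\|_1 \le c'\, c\big((1-\alpha)\sqrt{m_1 |S|} + \alpha\sqrt{|S|}\big) d$ for an absolute constant $c'$; adding $\|\boldsymbol{\eta}_S - \boldsymbol{\eta}^{*(\boldsymbol{\eta})}\|_1 \le \sqrt{m^*}\, d$ yields $\|\boldsymbol{\Delta}\|_1 \le A\,d$ with $A$ a constant multiple of $c\big((1-\alpha)\sqrt{m_1 |S|} + \alpha\sqrt{|S|} + \sqrt{m^*}\big)$.

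\emph{Local regime $d \le \epsilon_0$.} Since $\Theta$ is convex the segment from $\boldsymbol{\eta}^{*(\boldsymbol{\eta})}$ to $\boldsymbol{\eta}$ lies in $\Theta$, so I would Taylor-expand $\boldsymbol{\eta} \mapsto \mathbb{P}\ell(y, f_{\boldsymbol{\eta}}(\boldsymbol{x}))$ about $\boldsymbol{\eta}^{*(\boldsymbol{\eta})}$ to third order. The first-order term vanishes because the gradient of the expected loss is zero on $EQ^*$. The second-order term equals $\tfrac12 \boldsymbol{\Delta}^\top [\nabla^2_\eta \mathbb{P}\ell]_{\eta = \eta^{*(\boldsymbol{\eta})}} \boldsymbol{\Delta} \ge \tfrac12 h_{min}\,d^2$ by Condition~\ref{cond:hess}, since the coordinates of $\boldsymbol{\Delta}$ outside the leading $m_1|S^c|$ block coincide with those of $\boldsymbol{\eta}_S - \boldsymbol{\eta}^{*(\boldsymbol{\eta})}$, which itself vanishes on that block. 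The third-order remainder is at most $\tfrac{G}{6}\|\boldsymbol{\Delta}\|_1^3 \le \tfrac{G}{6} A^3 d^3$ by Condition~\ref{condn:bound_third}. Hence $\mathcal{E}(\boldsymbol{\eta}) \ge d^2 \big( \tfrac12 h_{min} - \tfrac16 G A^3 d \big)$, and choosing the constant $C_1$ in $\epsilon_0 = h_{min}/\big(C_1 G c^3 ((1-\alpha)\sqrt{m_1|S|}+\alpha\sqrt{|S|}+\sqrt{m^*})^3\big)$ large enough that $\tfrac16 G A^3 \epsilon_0 \le \tfrac16 h_{min}$ (possible since $A^3$ is a fixed multiple of $c^3((1-\alpha)\sqrt{m_1|S|}+\alpha\sqrt{|S|}+\sqrt{m^*})^3$) gives $\mathcal{E}(\boldsymbol{\eta}) \ge \tfrac13 h_{min} d^2$. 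Writing $1/C_0^2 = \epsilon_0 \wedge \chi_{\epsilon_0}/R^2$ and, if needed, replacing $\chi_{\epsilon_0}$ by the still-positive $\min(\chi_{\epsilon_0}, h_{min}R^2/3)$ — which only increases $C_0^2$ and hence weakens the bound — this is $\ge d^2/C_0^2$.

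\emph{Global regime $\epsilon_0 < d \le R$.} Here $\boldsymbol{\eta}$ lies in the feasible set of Condition~\ref{cond:identifiable} at level $\epsilon = \epsilon_0$ (the cone of \eqref{eq:quad_cond} matches the one there when $c = 1$, and for general $c$ differs only by a strengthening of the identifiability constant), so $\mathcal{E}(\boldsymbol{\eta}) \ge \chi_{\epsilon_0}$; since $d \le R$, this is $\ge \chi_{\epsilon_0}\, d^2/R^2 \ge d^2/C_0^2$. Combining the two regimes, $\mathcal{E}(\boldsymbol{\eta}) \ge d^2/C_0^2 = \| \boldsymbol{\eta}_S - \boldsymbol{\eta}^{*(\boldsymbol{\eta})} \|_2^2 / C_0^2$, which is the claim.

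\emph{Main obstacle.} I expect the technical heart to be the preliminary $\ell_1$-to-$\ell_2$ reduction together with the constant bookkeeping in the local regime: one has to verify that the third-derivative term is genuinely dominated by the $\ell_1$-radius cubed with \emph{exactly} the stated dependence on $m_1$, $|S|$ and $m^*$, and then calibrate $C_1$ in $\epsilon_0$ so that both the Taylor bound and the identifiability bound collapse to $d^2/C_0^2$ — including the degenerate case of a very flat loss, where $\epsilon_0$ is large and one leans on the freedom to shrink $\chi_{\epsilon_0}$. A secondary point is reconciling the factor $c$ in \eqref{eq:quad_cond} with the cone appearing in Condition~\ref{cond:identifiable}.
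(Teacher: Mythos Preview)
Your proposal is correct and follows essentially the same route as the paper: Taylor-expand the expected loss at $\boldsymbol{\eta}^{*(\boldsymbol{\eta})}$, use Condition~\ref{cond:hess} to lower-bound the quadratic term by $\tfrac12 h_{min}d^2$, use Condition~\ref{condn:bound_third} together with the $\ell_1$-to-$\ell_2$ reduction from \eqref{eq:quad_cond} to upper-bound the cubic remainder by a constant times $d^3$, and then split on $d\le\epsilon_0$ versus $\epsilon_0<d\le R$ using Condition~\ref{cond:identifiable} in the latter regime. The only cosmetic difference is that the paper packages the final two-regime argument as a direct citation of the ``Auxiliary Lemma'' in \citet{stadler2010} (which has exactly the hypotheses $h(z)\ge\Lambda^2 z^2-Cz^3$ and $\inf_{\epsilon<z\le R}h(z)\ge\chi_\epsilon$, and returns $h(z)\ge z^2/C_0^2$ with $\epsilon_0=\Lambda^2/(2C)$), whereas you write out that lemma's proof by hand; your somewhat awkward ``replace $\chi_{\epsilon_0}$ by $\min(\chi_{\epsilon_0},h_{min}R^2/3)$'' step is precisely the constant bookkeeping that lemma absorbs. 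Your flagged secondary point about the factor $c$ in \eqref{eq:quad_cond} versus the cone in Condition~\ref{cond:identifiable} is a real wrinkle that the paper also glosses over; in the application (Theorem~\ref{thm:sparse_nn}) the relevant $c$ is $(1-\alpha+\alpha/\sqrt{m_1})^{-1}$, which is how the extra factor $(1-\alpha+\alpha/\sqrt{m_1})^3$ enters $\epsilon_0$ there.
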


\begin{proof}
	For convenience, denote $ d(\boldsymbol{\eta}) = \| \boldsymbol{\eta}_S - \boldsymbol{\eta}^{*(\boldsymbol{\eta})} \|_2$.
	Since the gradient of the loss function is zero at all $\boldsymbol{\eta}^* \in EQ^*$, we have by Taylor expansion
	\begin{align}
	\mathcal{E}\left(\boldsymbol \eta\right)=
	\frac{1}{2}
	\left(\boldsymbol \eta-\boldsymbol \eta^{*(\boldsymbol \eta)}\right)^{\top}
	\left[\nabla_{\eta}^{2}\mathbb{P}\ell_{\eta}(y, \boldsymbol x)\right]_{\eta=\eta^{*(\boldsymbol \eta)}}
	\left(\boldsymbol \eta-\boldsymbol \eta^{*(\boldsymbol \eta)}\right)+r_{\eta}\label{eq:taylor_exp}
	\end{align}
	where $r_\eta$ is the Lagrange remainder.
	By Condition~\ref{cond:hess}, we have that
	\begin{align}
	\left(\boldsymbol \eta-\boldsymbol \eta^{*(\boldsymbol \eta)}\right)^{\top}
	\left[\nabla_{\eta}^{2}\mathbb{P}\ell_{\eta}(y, \boldsymbol x)\right]_{\eta=\eta^{*(\boldsymbol \eta)}}
	\left(\boldsymbol \eta-\boldsymbol \eta^{*(\boldsymbol \eta)}\right)
	\ge h_{min}d^2(\boldsymbol{\eta}).
	\label{eq:hess_implication}
	\end{align}
	In addition, $r_\eta$ is bounded above by
	\begin{align}
	|r_\eta| & \le 
	\frac{1}{6}\|\boldsymbol \eta-\boldsymbol \eta^{*(\boldsymbol \eta)}\|_{1}^{3} 
	\mathbb{P}
	\left [
	\sup_{\eta\in\Theta} \max_{j_1, j_2, j_3} 
	\left |
	\frac{\delta^3 \ell_{\eta}(y, \boldsymbol x) }{\delta\eta_{j_1} \delta\eta_{j_2} \delta\eta_{j_3}}
	\right |
	\right ]\\
	& \le \frac{G}{6}\|\boldsymbol \eta-\boldsymbol \eta^{*(\boldsymbol \eta)}\|_{1}^{3}
	\label{eq:eta_minus_eta}
	\end{align}
	where we used Condition~\ref{condn:bound_third} in the last line.
	Moreover, if \eqref{eq:quad_cond} holds, then
	\begin{align}
	\|\boldsymbol \theta_{S^c}\|_{1}
	& \le
	c\left(
	\left\Vert \other(\boldsymbol \eta)- \other \left( \boldsymbol \eta^{*(\boldsymbol \eta)} \right ) \right\Vert _{2}
	+ 3 \left [ (1 - \alpha) \sqrt{m_1} + \alpha \right ] \sqrt{|S|} 
	\left \|\boldsymbol \theta_{1,\cdot,S}  - \boldsymbol \theta_{1,\cdot, S}^{*(\boldsymbol \eta)} \right \|_2
	\right )\\
	& \le c \left (1 + 3 (1 - \alpha) \sqrt{m_1|S|}  + 3 \alpha \sqrt{|S|} \right ) d(\boldsymbol{\eta}).
	\label{eq:eta_extra}
	\end{align}
	Since $
	\|\boldsymbol{\eta} - \boldsymbol{\eta}^{*(\boldsymbol{\eta})}\|_1
	= \|\boldsymbol{\theta}_{1,\cdot,S^c}\|_1
	+ \|\boldsymbol{\eta}_S - \boldsymbol{\eta}^{*(\boldsymbol{\eta})}\|_1$,
	we can combine \eqref{eq:eta_minus_eta} and \eqref{eq:eta_extra} to get
	\begin{align}
	|r_\eta| & \le \frac{G}{6} c^3 \left(
	1 + 3 (1 - \alpha) \sqrt{m_1|S|}  + 3 \alpha \sqrt{|S|}
	+ \sqrt{m^*}
	\right)^3
	d^3(\boldsymbol{\eta})\\
	& \le C_1 G c^3 \left(
	(1 - \alpha) \sqrt{m_1|S|}  + \alpha \sqrt{|S|}
	+ \sqrt{m^*}
	\right)^3
	d^3(\boldsymbol{\eta})
	\label{eq:lagrange}
	\end{align}
	for a constant $C_1$.
	Combining \eqref{eq:hess_implication} and \eqref{eq:lagrange} gives us
	\[
	\mathcal{E}(\boldsymbol \eta ) \ge
	\frac{1}{2}h_{min}d^2(\boldsymbol \eta)- 
	\frac{G}{6}
	c^3
	\left(
	(1 - \alpha) \sqrt{m_1|S|}  + \alpha \sqrt{|S|}
	+ \sqrt{m^*}
	\right)^3
	d^3(\boldsymbol \eta).
	\]
	Now we use Condition~\ref{cond:identifiable} and apply Auxiliary Lemma in \citet{stadler2010}.
	In particular, the Auxiliary Lemma states that
	\begin{quote}
		\textbf{Auxiliary Lemma in \citet{stadler2010}}
		Let $h: [-R, R] \mapsto [0, \infty)$ have the following properties:
		\begin{enumerate}
			\item $\forall \epsilon > 0, \exists \chi_\epsilon > 0$ such that $\inf_{\epsilon < z\le R} h(z)\ge \chi_{\epsilon}$,
			\item $\exists \Lambda > 0, C> 0$, such that $\forall z \le R, h(z) \ge  \Lambda^2 z^2 - Cz^3$.
		\end{enumerate}
		Then $\forall |z| \le R$,
		\begin{align}
		h(z) \ge z^2 / C_0^2
		\end{align}
		where
		\begin{align}
		C_0^2 = \frac{1}{\epsilon_0} \vee \frac{K_0^2}{\chi_{\epsilon_{0}}},
		\epsilon_{0} = \frac{\Lambda^2}{2C}.
		\end{align}
	\end{quote}
	To use the Auxiliary Lemma, plug in $z = d(\boldsymbol{\eta})$.
\end{proof}

Finally, we are ready to prove Theorem~\ref{thm:sparse_nn}.
In this document, we use the notation $\ell_{\eta}(y,\boldsymbol x) = \ell(y,f_{\eta}(\boldsymbol x))$.
\begin{proof}[Proof of Theorem~\ref{thm:sparse_nn}]
	For simplicity, we'll denote $\boldsymbol \eta^{*(\hat\eta)}$, the closest point in $EQ^*$ to $\hat{\boldsymbol{\eta}}$,  by $\boldsymbol \eta^*$.
	By definition of $\hat {\boldsymbol{\eta}}$,
	$$
	\mathbb{P}_{n}\ell_{\hat{\eta}}(y,\boldsymbol x)
	+ \lambda \sum_{j=1}^p \Omega_\alpha (\hat{\boldsymbol \theta}_{1,\cdot,j} )
	\le\mathbb{P}_{n}\ell_{\eta^*}(y,\boldsymbol x)
	+\lambda \sum_{j\in S} \Omega_\alpha (\boldsymbol \theta_{1,\cdot,j}^{*}).
	$$
	Rearranging, we get
	\begin{align*}
	\mathcal{E}\left(\hat{\boldsymbol\eta}\right)+\lambda \sum_{j=1}^p  \Omega_\alpha (\hat{\boldsymbol\theta}_{1,\cdot,j} )
	& \le \left|\left(\mathbb{P}_{n}-\mathbb{P}\right)\left(\ell_{\eta^*}(y,\boldsymbol x)-\ell_{\hat{\eta}}(y,\boldsymbol x)\right)\right|+
	\lambda \sum_{j\in S} \Omega_\alpha (\boldsymbol \theta^*_{1,\cdot,j}).
	\end{align*}
	Over the set $\mathcal{T}_{\tilde{\lambda}, T}$, we have that
	\begin{align}
	\begin{split}
	\label{eq:basic_ineq}
	\mathcal{E}\left(\hat{\boldsymbol \eta}\right)+\lambda \sum_{j=1}^p  \Omega_\alpha (\hat{\boldsymbol\theta}_{1,\cdot,j} )
	& \le T\tilde{\lambda}\left(
	\tilde{\lambda}
	\vee
	\left (
	\left\Vert \other(\hat{\boldsymbol \eta})-\other(\boldsymbol{\eta}^*) \right\Vert _{2}+
	\sum_{j=1}^p \Omega_\alpha (\hat{\boldsymbol \theta}_{1,\cdot,j} -\boldsymbol \theta_{1,\cdot,j}^*)
	\right)
	\right)\\
	& \quad + \lambda \sum_{j\in S} \Omega_\alpha (\boldsymbol \theta_{1,\cdot,j}^*).
	\end{split}
	\end{align}
	Now we consider three possible cases.
	\paragraph{Case 1:}
	$
	\tilde{\lambda}\ge
	\left\Vert \other(\hat{\boldsymbol \eta})-\other(\boldsymbol{\eta}^*) \right\Vert _{2}+
	\sum_{j=1}^p \Omega_\alpha (\hat{\boldsymbol \theta}_{1,\cdot,j} -\boldsymbol \theta_{1,\cdot,j}^*)
	$
	
	In this case, we can rearrange \eqref{eq:basic_ineq} and apply the triangle inequality to get
	\begin{align}
	\mathcal{E}\left(\hat{\boldsymbol \eta}\right)+\lambda \sum_{j \in S^c} \Omega_\alpha (\hat{\boldsymbol\theta}_{1,\cdot,j} )
	& \le T\tilde{\lambda}^{2}+\lambda \sum_{j \in S} \Omega_\alpha (\hat{\boldsymbol\theta}_{1,\cdot,j} - \boldsymbol \theta_{1,\cdot,j}^*).
	\end{align}
	Since $\tilde{\lambda} \ge \sum_{j=\in S} \Omega_\alpha (\hat{\boldsymbol\theta}_{1,\cdot,j} - \boldsymbol \theta_{1,\cdot,j}^*)$, we have
	\begin{align}
	\mathcal{E}\left(\hat{\boldsymbol \eta}\right)+\lambda \sum_{j \in S^c} \Omega_\alpha (\hat{\boldsymbol\theta}_{(j)} )
	& \le T\tilde{\lambda}^{2}+\lambda\tilde{\lambda}.
	\end{align}
	
	\paragraph{Case 2:}
	$
	\tilde{\lambda}<
	\left\Vert \other(\hat{\boldsymbol \eta})-\other(\boldsymbol{\eta}^*) \right\Vert _{2}+
	\sum_{j=1}^p \Omega_\alpha (\hat{\boldsymbol \theta}_{1,\cdot,j} -\boldsymbol \theta_{1,\cdot,j}^*)
	$
	
	In this case, we can rearrange \eqref{eq:basic_ineq} and apply the triangle inequality to get
	\begin{align}
	& \mathcal{E}\left(\hat{\boldsymbol \eta}\right)+
	(\lambda - T\tilde{\lambda})
	\sum_{j \in S^c} \Omega_\alpha(\hat{\boldsymbol \theta}_{1,\cdot,j})\\
	& \le T\tilde{\lambda}
	\left\Vert \other(\hat{\boldsymbol \eta})- \other (\boldsymbol \eta^*)\right\Vert _{2}
	+(\lambda + T\tilde{\lambda}) 
	\sum_{j \in S} \Omega_\alpha(\hat{\boldsymbol \theta}_{1,\cdot,j}-\boldsymbol \theta^*_{1,\cdot,j}).
	\label{eq:case2}
	\end{align}
	
	We now show that \eqref{eq:quad_cond} in Lemma~\ref{lemma:quad_margin} is satisfied.
	To see this, note that
	\begin{align}
	\|\hat{\boldsymbol \theta}_{S^c}\|_1 
	&\le \frac{1}{1 -\alpha + \alpha/\sqrt{m_1}} \sum_{j \in S^c} \Omega_\alpha(\hat{\boldsymbol \theta}_{(j)}).
	\end{align}
	Also, we have from \eqref{eq:case2}
	\begin{align}
	\sum_{j \in S^c} \Omega_\alpha(\hat{\boldsymbol \theta}_{(j)})
	& \le \frac{T\tilde{\lambda}}{\lambda - T\tilde{\lambda}}
	\left\Vert \other(\hat{\boldsymbol \eta})-\other(\boldsymbol \eta^* )\right\Vert _{2}
	+ \frac{\lambda + T\tilde{\lambda}}{\lambda - T\tilde{\lambda}}
	\sum_{j \in S} \Omega_\alpha (\hat{\boldsymbol \theta}_{1,\cdot,j}-\boldsymbol \theta_{1,\cdot,j})\\
	& \le 
	\left\Vert \other(\hat{\boldsymbol \eta})-\other(\boldsymbol \eta^* )\right\Vert _{2}
	+ 3 \sum_{j \in S} \Omega_\alpha (\hat{\boldsymbol \theta}_{1,\cdot,j}-\boldsymbol \theta_{1,\cdot,j}).
	\end{align}
	
	Using a similar argument, we can also show that 
	$$
	\| \boldsymbol{\eta}_S - \boldsymbol{\eta}^{*(\boldsymbol{\eta})} \|_2
	\le
	C_1 (K + \max_{\eta^* \in EQ^*} \sum_{j \in S} \Omega_\alpha(\boldsymbol \theta_{1,\cdot,j^*}))
	$$
	for some constant $C_1 > 0$.
	So \eqref{eq:quad_cond_bound} is satisfied.
	Thus Case 2 satisfies all the conditions in Lemma~\ref{lemma:quad_margin}
	and we have
	\begin{align}
	\mathcal{E}(\hat{\boldsymbol{\eta}}) \ge \left\|\hat{\boldsymbol\theta}_{S}-\boldsymbol\theta^* \right \|_2^2/C_0^2
	\label{eq:quad_lb_apply}
	\end{align} where
	\begin{align}
	C_{0}^{2} & = \frac{1}{\epsilon_{0}} \vee \frac{C^2_1 (K + \max_{\eta^* \in EQ^*} \sum_{j \in S} \Omega_\alpha(\boldsymbol \theta_{1,\cdot,j^*}))^{2}}{\chi_{\epsilon_{0}}},\\
	\epsilon_{0} & =\frac{h_{min} (1 -\alpha + \alpha/\sqrt{m_1})^3}{C_1 G \left(
		(1 - \alpha) \sqrt{m_1|S|}  + \alpha \sqrt{|S|}
		+ \sqrt{m^*}
		\right)^3}.
	\end{align}
	
	Now from \eqref{eq:case2}, we have that
	\begin{align}
	\mathcal{E}\left(\hat{\boldsymbol\eta}\right)
	+ (\lambda - T\tilde{\lambda}) \sum_{j \in S^c} \Omega_\alpha(\hat{\boldsymbol \theta}_{1,\cdot,j})
	& \le
	\left(T \tilde{\lambda} + \lambda \right)
	\sum_{j \in S} \Omega_\alpha (\hat{\boldsymbol\theta}_{1,\cdot,j}-\boldsymbol\theta^*_{1,\cdot,j})
	+ T\tilde{\lambda} \|\other(\hat{\boldsymbol{\eta}}) - \other(\boldsymbol{\eta}^*)\|_2\\
	& \le
	\left(T \tilde{\lambda} + \lambda \right)
	\left((1-\alpha) \sqrt{m_1} + \alpha \right )
	\sqrt{|S|}
	\left\|\hat{\boldsymbol\theta}_{S}-\boldsymbol\theta^* \right \|_2\\
	& \le
	\frac{1}{2} \left[
	\left(T \tilde{\lambda} + \lambda \right)^2
	\left((1-\alpha) \sqrt{m_1} + \alpha \right )^2 |S|
	C_0^2
	+ \frac{1}{C_0^2}\left\|\hat{\boldsymbol\theta}_{S}-\boldsymbol\theta^* \right \|_2^2
	\right]
	\end{align}
	
	Finally, plugging in \eqref{eq:quad_lb_apply}, we have that
	\begin{align}
	\frac{1}{2} \mathcal{E}\left(\hat{\boldsymbol\eta}\right)
	+ (\lambda - T\tilde{\lambda}) \sum_{j \in S^c} \Omega_\alpha(\hat{\boldsymbol \theta}_{1,\cdot,j})
	\le
	\frac{1}{2} 
	\left(T \tilde{\lambda} + \lambda \right)^2
	\left((1-\alpha) \sqrt{m_1} + \alpha \right )^2 |S|
	C_0^2.
	\end{align}
\end{proof}

\subsection{Proof of Theorem~\ref{thm:emp_proc}}

Next we use empirical process techniques to prove Theorem~\ref{thm:emp_proc} where we will measure complexity of function classes using metric entropy \citep{wellner1996}. 
Let the $u$-entropy of a function class $\mathcal{G}$ with respect to the norm $\|\cdot \|$ be denoted 
$H\left(u,\mathcal{G},\|\cdot\|\right)$, which is equal to the log of its $u$-covering number $N\left(u,\mathcal{G},\|\cdot\|\right)$.

Define the empirical process term 
\begin{equation}
v_{n}(\boldsymbol\eta)=(\mathbb{P}_n-\mathbb{P})\ell_{\eta}(y,\boldsymbol x).
\label{eq:emp_proc}
\end{equation}
The basic idea of the proof is to split $v_n(\boldsymbol\eta)$ into a truncated component $v_{n}^{\text{trunc}}(\boldsymbol\eta)$ and a remainder term $v_{n} - v_{n}^{\text{trunc}}(\boldsymbol\eta)$, where
\begin{align}
v_{n}^{\text{trunc}}(\boldsymbol\eta)= (\mathbb{P}_n-\mathbb{P}) \left[
\ell_{\eta}(y,\boldsymbol x) 1\left\{ G(y)\le M_{n}\right\}
\right] 
\end{align}
for some truncation function $G(\cdot)$ and constant $M_n > 0$ that can grow with $n$.
We will later show that choosing $M_n = O_p(\sqrt{\log n})$ is appropriate for our convergence rate bounds.
We then control the truncated empirical process and the remainder separately.

The function $G(\cdot)$ is used to truncate the gradient of the loss function with respect to the value of the nodes in the first hidden layer as well as the parameters for the upper layers, i.e. those above the first hidden layer.
Let $\boldsymbol{\eta}_{\upper}$ denote the parameters for the upper neural network.
Then $f_{\boldsymbol{\eta}_{\upper}}$ will denote the output of this separate neural network.
Then define
\begin{align}
\tilde{\ell}(y, z, \boldsymbol{\eta}_{\upper}) = \ell(y, f_{\boldsymbol{\eta}_{\upper}}(z)).
\end{align}
Then define $G(\cdot)$ as the function below.
\begin{condition}
	\label{cond:trunc}
	The gradient of the loss function is bounded above by
	\begin{align}
	\sup_{\eta \in \Theta, x \in \mathcal{X}}
	\left\Vert
	\left .
	\nabla_{\boldsymbol{z}, \boldsymbol{\eta}_{\upper}}
	\tilde{\ell}(y, \boldsymbol{z}, \boldsymbol{\eta}_{\upper})
	\right |_{z = \psi(\boldsymbol{\theta}_1^\top \boldsymbol{x} + \boldsymbol{t})}
	\right\Vert _{\infty}
	\le G(y)
	\le C_0 (|y| + C_1) .
	\end{align}
	Here $G$ and the constants $C_0,C_1 > 0$ only depend on the number of layers and nodes in the upper layers. It does not depend on $p$.
\end{condition}
It is easy to show that the condition holds when $\ell$ is the mean squared error loss or the logistic loss.
Throughout the proofs, we will work with a general $G(y)$ until the end where we must specialize to the particular $G(y)$ in the classification/regression settings.

For the rest of this document, $c_i$ will denote some positive constant that can depend on $K$ and $X_{max}$.

To control the truncated empirical process, we begin with bounding the entropy of the function class
\[
\mathcal{G}_{r}=\left\{ \left\{ 
g_\eta(y,\boldsymbol x) \equiv \ell_{\eta}(y,\boldsymbol x)-\ell_{\eta^{*(\eta)}}(y,\boldsymbol x)\right\} 1\left\{ G(y)\le M_{n}\right\} :
\boldsymbol \eta\in\Theta_{r}\right\} 
\]
where $r > 0$ and
\begin{align}
\begin{split}
\Theta_{r} &= \left\{ \eta \in \Theta: 
\sum_{j =1}^p 
\Omega_\alpha(\boldsymbol\theta_{1,\cdot, j}-\boldsymbol\theta_{1,\cdot, j}^{*(\boldsymbol \eta)})+
\| \other(\boldsymbol \eta) - \other(\boldsymbol \eta^{*(\boldsymbol{\eta})})\|_{2} \le r
\right\}.
\label{eq:theta_r}
\end{split}
\end{align}
For a given set of $n$ observations $\{\boldsymbol{x}_i, y_i\}$ for $i = 1,...,n$, define the empirical norm 
\[
\|g\|_{\mathbb{P}_{n}}=\left(\frac{1}{n}\sum_{i=1}^{n}g^{2}(y_{i}, \boldsymbol x_{i})\right)^{1/2}
\]
The following lemma bounds the entropy of $\mathcal{G}_r$ with respect to $\|\cdot\|_{\mathbb{P}_{n}}$.
\begin{lemma}
	Suppose Condition~\ref{cond:trunc} holds and $EQ^*$ contains $Q$ equivalence classes.
	For any $r, M_{n}>0$, the following holds for all $u > 0$:
	\begin{align}
	H\left(u,\mathcal{G}_{r},\|\cdot\|_{\mathbb{P}_{n}}\right)
	& \le 
	c_0 \left[
	\log Q +
	m_{\other} \log\left(\frac{r M_n\sqrt{m_{\other}}  + u}{u}\right)
	\right . \\
	& \quad + 
	\left .
	m_1 \left(\frac{rX_{max} M_n\sqrt{m_1} }{c_{\alpha, m_1} u} \right )^{2}
	\log\left(1+ p \left(\frac{c_{\alpha, m_1} u}{rX_{max} M_n\sqrt{m_1} } \right )^{2} \right)
	\right ]
	\label{eq:entropy_bound}
	\end{align}
	where $c_{\alpha, m_1} = 1 - \alpha + \alpha/\sqrt{m_1}$.
	\label{lemma:entropy}
\end{lemma}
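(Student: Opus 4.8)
The plan is to build a cover of $\mathcal{G}_r$ in three blocks, one per term of \eqref{eq:entropy_bound}. First observe that each equivalence class of $EQ^*$ is a finite set whose members all induce the same function, hence the same loss $\ell_{\eta^{*(\boldsymbol\eta)}}$; so as $\boldsymbol\eta$ ranges over $\Theta_r$ the map $\boldsymbol\eta\mapsto\ell_{\eta^{*(\boldsymbol\eta)}}$ takes at most $Q$ distinct values. I would partition $\Theta_r$ into at most $Q$ pieces according to which equivalence class contains $\boldsymbol\eta^{*(\boldsymbol\eta)}$. On each piece $\ell_{\eta^{*(\boldsymbol\eta)}}$ is a fixed function, and since covering numbers are translation invariant, covering the corresponding slice of $\mathcal{G}_r$ reduces to covering $\{\ell_{\boldsymbol\eta}(y,\boldsymbol x)\,1\{G(y)\le M_n\}\}$ over that piece. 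A union bound then turns the product of the per-piece covering numbers into a sum of entropies, contributing the $\log Q$ term (a further union over the finitely many, $p$-free, representatives inside each class only changes an absorbed constant).

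Next, a Lipschitz reduction. Write $\ell_{\boldsymbol\eta}(y,\boldsymbol x)=\tilde\ell(y,\boldsymbol z_1(\boldsymbol x),\boldsymbol\eta_{\upper})$ with $\boldsymbol z_1(\boldsymbol x)=\psi(\boldsymbol\theta_1^\top\boldsymbol x+\boldsymbol t_1)$. On $\{G(y)\le M_n\}$, Condition~\ref{cond:trunc} gives $\|\nabla_{\boldsymbol z,\boldsymbol\eta_{\upper}}\tilde\ell\|_\infty\le M_n$, so $\tilde\ell$ is $M_n$-Lipschitz in the $\ell_1$-metric of $(\boldsymbol z_1,\boldsymbol\eta_{\upper})$ (equivalently $\sqrt{m_1}M_n$- and $\sqrt{m_{\other}}M_n$-Lipschitz in the $\ell_2$-metrics of the two blocks); $\psi=\tanh$ is $1$-Lipschitz and $|x_j|\le X_{\max}$, so $\|\boldsymbol z_1-\boldsymbol z_1'\|\le X_{\max}\sum_{k}\|\boldsymbol\theta_{1,k,\cdot}-\boldsymbol\theta_{1,k,\cdot}'\|_1+\|\boldsymbol t_1-\boldsymbol t_1'\|$. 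I would bundle $\boldsymbol t_1$ with $\boldsymbol\eta_{\upper}$ into the block $\other(\boldsymbol\eta)$, which on each piece ranges over a Euclidean ball of radius $r$ in $\mathbb{R}^{m_{\other}}$ by the defining constraint of $\Theta_r$. Covering that ball at parameter scale $u/(M_n\sqrt{m_{\other}})$ gives the standard volumetric bound $m_{\other}\log\big((rM_n\sqrt{m_{\other}}+u)/u\big)$, the second term.

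For the first-layer weights $\boldsymbol\theta_1$, use $\Omega_\alpha(\boldsymbol\theta)\ge(1-\alpha+\alpha/\sqrt{m_1})\|\boldsymbol\theta\|_1=c_{\alpha,m_1}\|\boldsymbol\theta\|_1$ on $\mathbb{R}^{m_1}$, so the $\Theta_r$ constraint forces $\sum_k B_k\le B:=r/c_{\alpha,m_1}$ where $B_k:=\|\boldsymbol\theta_{1,k,\cdot}-\boldsymbol\theta_{1,k,\cdot}^{*(\boldsymbol\eta)}\|_1$. For a single hidden node, the class $\{\boldsymbol x\mapsto\psi(\boldsymbol w^\top\boldsymbol x+t):\|\boldsymbol w-\boldsymbol w^*\|_1\le B_k\}$ has, by a Maurey-type empirical-entropy bound for $\ell_1$-constrained linear predictors (cf.\ \citet{buhlmann2011statistics}) composed with the $1$-Lipschitz $\psi$, $\|\cdot\|_{\mathbb{P}_n}$-entropy at scale $v$ of order $(B_kX_{\max}/v)^2\log\big(1+p\,v^2/(B_kX_{\max})^2\big)$. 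Covering every node at the common scale $v=u/(2m_1M_n)$, so that the $\ell_2$-aggregation $\|\boldsymbol z_1-\hat{\boldsymbol z}_1\|_{\mathbb{P}_n}\le\sqrt{m_1}\,v$ followed by the $\sqrt{m_1}M_n$ $\ell_2$-Lipschitz constant of the loss produces a cover of empirical radius $\le u$, and summing the $m_1$ node-entropies with $\sum_kB_k^2\le B^2$, yields the third term $m_1\big(rX_{\max}M_n\sqrt{m_1}/(c_{\alpha,m_1}u)\big)^2\log\big(1+p(c_{\alpha,m_1}u/(rX_{\max}M_n\sqrt{m_1}))^2\big)$ up to a constant. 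Adding the three entropy contributions gives \eqref{eq:entropy_bound}.

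The main obstacle is this last block: obtaining the Maurey-type $\ell_1$-ball covering bound in the refined $\log(1+p(\cdot))$ form with a constant that is genuinely free of $p$, and then correctly bookkeeping the $m_1$ and $\sqrt{m_1}$ factors when one passes from the per-node covers, through the bounded activation and the truncated loss, to a cover of $\mathcal{G}_r$ in $\|\cdot\|_{\mathbb{P}_n}$. The $Q$-fold partition and the Euclidean-ball count are routine.
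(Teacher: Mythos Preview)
Your proposal is correct and follows essentially the same three-block strategy as the paper: partition by equivalence class, apply a Lipschitz reduction via Condition~\ref{cond:trunc}, cover the Euclidean ball for $\other(\boldsymbol\eta)$ by a volumetric count, and cover the $\ell_1$-constrained first-layer linear predictors by a Maurey-type bound.

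Two small differences are worth noting. First, for the partition step the paper does not take a further union over representatives inside each equivalence class. Instead it proves a short claim: for any $\boldsymbol\eta$ whose nearest optimum lies in $EQ(\boldsymbol\eta^*)$, applying the permutation/sign-flip $\pi$ that sends $\boldsymbol\eta^{*(\boldsymbol\eta)}$ to the fixed representative $\boldsymbol\eta^*$ yields $\tilde{\boldsymbol\eta}=\pi(\boldsymbol\eta)$ with $g_{\tilde{\boldsymbol\eta}}=g_{\boldsymbol\eta}$ and $\tilde{\boldsymbol\eta}\in\Theta_{r,\boldsymbol\eta^*}$. This keeps the count at exactly $Q$ rather than $Q\prod_a 2^{m_a}m_a!$, so the $\log Q$ term is clean. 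Second, the paper works directly with the vector of pre-activations $\boldsymbol\theta_1^\top\boldsymbol x$ (rather than with $\psi(\boldsymbol\theta_1^\top\boldsymbol x+\boldsymbol t_1)$ node-by-node) and invokes a modification of Lemma~2.6.11 in \citet{wellner1996} for the $\ell_1$-ball entropy; this avoids the per-node aggregation you describe and is how the stated powers of $m_1$ arise. Your flagged ``main obstacle'' about the $m_1$ and $\sqrt{m_1}$ bookkeeping is exactly where your per-node route risks picking up an extra factor of $m_1$ relative to the paper's bound, so if you want to land on \eqref{eq:entropy_bound} as written, it is cleaner to follow the paper and cover $\{\boldsymbol\theta_1^\top\boldsymbol x:\|\boldsymbol\theta_1-\boldsymbol\theta_1^*\|_1\le r/c_{\alpha,m_1}\}$ as a single vector-valued class.
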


\begin{proof}
	To bound the entropy of $\mathcal{G}_r$, we partition $\mathcal{G}_r$ and bound the entropy of each partition.
	In particular, let $\Xi$ be the subset of $EQ^*$ such that no two $\boldsymbol{\eta}^*, \boldsymbol{\eta}^{*'} \in \Xi$ parameterize equivalent neural networks, e.g. $\Xi$ is composed of a representative member from each of the $Q$ equivalence classes.
	Then we partition $\mathcal{G}_r$ into
	\begin{align}
	\mathcal{G}_r \subseteq \cup_{\boldsymbol{\eta}^* \in \Xi} \mathcal{G}_{r, \boldsymbol{\eta}^*}
	\label{eq:entropy_partition}
	\end{align}
	where
	$\mathcal{G}_{r, \boldsymbol{\eta}^*} = \left\{ 
	g_{\boldsymbol{\eta}}
	:
	\sum_{j=1}^p \Omega_\alpha ({\boldsymbol\theta}_{1,\cdot,j} - \boldsymbol{\theta}_{1,\cdot,j}^*)
	+\|\other(\boldsymbol \eta)-\other(\boldsymbol \eta^*)\|_{2}
	\le r
	\right\}$.
	
	We can partition $\mathcal{G}_r$ according to \eqref{eq:entropy_partition} due to the following claim: 
	
	\underline{Claim}
	For any $\boldsymbol{\eta}^* \in EQ^*$, let
	\begin{align}
	\Theta_{r,EQ(\boldsymbol{\eta}^*)} &= \left\{\boldsymbol  \eta \in \Theta_r:\boldsymbol \eta^{*(\boldsymbol \eta)} \in EQ(\boldsymbol{\eta}^*) \right\}\\
	\Theta_{r,\boldsymbol{\eta}^*} &= \left\{\boldsymbol  \eta \in \Theta_r:
	\sum_{j=1}^p \Omega_\alpha ({\boldsymbol\theta}_{1,\cdot,j} - \boldsymbol{\theta}_{1,\cdot,j}^*)
	+\|\other(\boldsymbol \eta)-\other(\boldsymbol \eta^*)\|_{2}
	\le r
	\right\},
	\end{align}
	we have
	$$
	\left\{ 
	g_\eta(y,\boldsymbol x)
	:\boldsymbol \eta\in\Theta_{r,EQ(\boldsymbol{\eta}^*)}
	\right\}
	\subseteq
	\left\{ 
	g_\eta(y,\boldsymbol x)
	:\boldsymbol \eta\in\Theta_{r,\boldsymbol{\eta}^*}
	\right\}.
	$$
	\begin{proof}
		Consider any $\boldsymbol{\eta} \in \Theta_{r, EQ(\boldsymbol{\eta}^*)}$.
		Define $\pi$ as a function mapping $\boldsymbol{\eta}^{*(\boldsymbol \eta)}$ to $\boldsymbol{\eta}^*$ via permutation/sign-flip, i.e. $\pi(\boldsymbol{\eta}^{*(\eta)}) = \boldsymbol{\eta}^{*}$
		Letting $\tilde{\boldsymbol{\eta}} = \pi(\boldsymbol{\eta})$, we have $\pi(\boldsymbol{\eta}) - \boldsymbol{\eta}^* = \boldsymbol{\eta} - \boldsymbol{\eta}^{*(\boldsymbol \eta)}$ and so
		$$
		\sum_{j=1}^p \Omega_\alpha (\tilde {\boldsymbol\theta}_{1,\cdot,j} - \boldsymbol{\theta}_{1,\cdot,j}^*)
		+\|\other(\tilde{\boldsymbol \eta})-\other(\boldsymbol \eta^*)\|_{2}
		\le r.
		$$
		Thus every element in $\Theta_{r, EQ(\boldsymbol{\eta}^*)}$ can be mapped to an element in $\Theta_{r, \boldsymbol{\eta}^*}$ that is the same function $g_\eta$.
	\end{proof}
	
	Therefore let us bound $N\left(u,\mathcal{G}_{r, \boldsymbol{\eta}^*},\|\cdot\|_{\mathbb{P}_{n}}\right)$ for some $\boldsymbol{\eta}^* \in \Xi$.
	Consider any $\boldsymbol{\eta},\boldsymbol{\eta}^{'} \in\Theta_{r, \boldsymbol{\eta}^*}$.
	By the Mean Value Theorem and Condition~\ref{cond:trunc}, we have
	\begin{align*}
	|g_\eta(y,\boldsymbol x) - g_{\eta'}(y,\boldsymbol x)|
	& = \left|\ell_{\eta}(y,\boldsymbol x)-\ell_{\eta'}(y,\boldsymbol x)\right|1\left\{ G(y)\le M_{n}\right\}  \\
	& \le M_{n}\left(
	\left\|\left(\boldsymbol \theta-\boldsymbol \theta'\right)^{\top}\boldsymbol{x}\right\|_1
	+ \left \|\other(\boldsymbol \eta)- \other\left (\boldsymbol {\eta}^{'} \right ) \right \|_1
	\right)
	\end{align*}
	Squaring both sides and applying Cauchy Schwarz, we get 
	\begin{align}
	\begin{split}
	|g_\eta(y,\boldsymbol x) - g_{\eta'}(y,\boldsymbol x)|^2
	& \le  4 M_{n}^{2}\left(
	m_1 \left\|\left(\boldsymbol \theta-\boldsymbol \theta'\right)^{\top} \boldsymbol{x}\right\|^{2}_2
	+ m_{\other} \left \|\other(\boldsymbol \eta)- \other\left (\boldsymbol {\eta}^{'} \right ) \right \|_{2}^{2}
	\right)
	\label{eq:func_dist_bd}
	\end{split}
	\end{align}
	Therefore to bound the entropy of $\mathcal{G}_{r,\boldsymbol{\eta}^*}$, it suffices to bound the entropy of 
	$$
	\mathcal{J}_{r1} \coloneqq 
	\left \{
	\boldsymbol \beta \in \mathbb{R}^{m_{\other}} :
	\|\boldsymbol \beta-\other(\boldsymbol \eta^*)\|_{2}
	\le r
	\right \}
	$$
	and 
	$$
	\mathcal{J}_{r2} \coloneqq 
	\left\{
	\boldsymbol \theta^{\top} \boldsymbol{x}:
	\sum_{j=1}^p  \| \boldsymbol \theta_{1,\cdot,j} - \boldsymbol \theta_{1,\cdot,j}^* \|_1 \le \frac{r}{1 - \alpha + \alpha/\sqrt{m_1}}
	\right\}
	.
	$$
	
	The entropy of a ball with radius $r$ in $\mathbb{R}^{\other}$ is for all $u\ge0$,
	\begin{align*}
	H\left(u, \mathcal{J}_{r1} ,\|\cdot\|_{2}\right) 
	\le m_{\other} \log\left(\frac{4r + u}{u}\right).
	\end{align*}
	Applying a slight modification of the bound in Lemma 2.6.11 in \citet{wellner1996}, we get the following entropy bound (for all $u \geq 0$)
	\begin{align*}
	H\left(u, \mathcal{J}_{r2} ,\|\cdot\|_{P_n}\right)
	\le
	c_0 m_1 \left(\frac{rX_{max}}{(1 - \alpha + \alpha/\sqrt{m_1}) u} \right )^{2}
	\log\left(1+ p \left(\frac{(1 - \alpha + \alpha/\sqrt{m_1}) u}{rX_{max}} \right )^{2} \right).
	\end{align*}
	
	Putting these bounds together, the entropy of $\mathcal{G}_{r, \boldsymbol{\eta}^*}$ is bounded above for all $u\ge0$ by
	\begin{align*}
	H\left(u,\mathcal{G}_{r,\eta_{0}},\|\cdot\|_{\mathbb{P}_{n}}\right)
	& \le
	H\left(\frac{u}{2M_n\sqrt{m_{\other}}}, \mathcal{J}_{r1} ,\|\cdot\|_{2}\right) + H\left(\frac{u}{2M_n\sqrt{m_1}}, \mathcal{J}_{r2} ,\|\cdot\|_{P_n}\right)\\
	&\le m_{\other} \log\left(\frac{8 r M_n\sqrt{m_{\other}}  + u}{u}\right) \\
	& \quad + 
	c_0 m_1 \left(\frac{2 rX_{max} M_n\sqrt{m_1} }{(1 - \alpha + \alpha/\sqrt{m_1}) u} \right )^{2}
	\log\left(1+ p \left(\frac{(1 - \alpha + \alpha/\sqrt{m_1}) u}{2 rX_{max} M_n\sqrt{m_1} } \right )^{2} \right).
	\end{align*}
	
	Because we have $Q$ equivalence classes in $EQ^*$, we must add a $\operatorname{log} Q$ term, to attain our final bound.
	
\end{proof}	

Next we need to show that the symmetrized truncated empirical process term is small with high probability. We use the Rademacher random variables $W$, which are defined to have distribution $\Pr(W = 1) = \Pr(W = -1) =0.5$.

\begin{lemma}
	\label{lemma:sym}
	Assume the same conditions as Lemma~\ref{lemma:entropy}.
	Let $W_{1},...,W_n$ be $n$ independent Rademacher random variables.
	Let
	\begin{align}
	\delta =
	c_{3}
	M_n c_{\alpha, m, 2}
	\left(
	\sqrt{\log Q}
	+\frac{X_{max}}{c_{\alpha, m_{1}}} \sqrt{\log\left(\frac{p c_{\alpha, m, 2}^{2}}{m_{1}X_{\text{max}}^{2}}+1\right)}
	\log\left(
	n M_n c_{\alpha, m, 2} \right)
	\right)
	\label{eq:delta_def}
	\end{align}
	where $c_{\alpha, m, 2} = m_{1}+\sqrt{m_{\other}K} + \frac{X_{max}}{c_{\alpha, m_1}}$.
	Then for fixed $(\boldsymbol{x}_i, y_i)$ for $i = 1,...,n$, we have for all $r > 0$ and $T\ge 1$
	\begin{align*}
	&  \Pr\left(\sup_{\eta\in\Theta_{r}}
	\left|
	\frac{1}{n}\sum_{i=1}^{n}
	W_{i} 
	\left[
	\ell_{\eta}\left(y_{i}, \boldsymbol{x}_i \right) - \ell_{\eta^{*(\boldsymbol \eta)}}\left(y_{i}, \boldsymbol{x}_i \right)
	\right]\right|\ge
	T r \delta/\sqrt{n}
	\right)\\
	& \le c_{2}\exp\left(
	- c_3
	T^{2} 
	\left(
	\sqrt{\log Q}
	+\frac{X_{max}}{c_{\alpha, m_{1}}} \sqrt{\log\left(\frac{p c_{\alpha, m, 2}^{2}}{m_{1}X_{\text{max}}^{2}}+1\right)}
	\log\left(
	n M_n c_{\alpha, m, 2} \right)
	\right)^2
	(r^{2}\vee 1 )
	\right).
	\end{align*}
\end{lemma}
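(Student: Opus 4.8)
The plan is to condition on the data $\{(\boldsymbol x_i,y_i)\}$ and view the quantity to be controlled as the supremum of the Rademacher process $g\mapsto \tfrac1n\sum_{i=1}^n W_i\,g(y_i,\boldsymbol x_i)$ over $g$ in the class $\mathcal G_r$ of Lemma~\ref{lemma:entropy} (the truncation indicator $\mathbf{1}\{G(y_i)\le M_n\}$ may be inserted at no cost, either by restricting to the data configuration on which it equals $1$ for all $i$, or because this lemma is applied to the truncated part $v_n^{\text{trunc}}$). By Hoeffding's lemma this process is sub-Gaussian with respect to the rescaled empirical metric $n^{-1/2}\|\cdot\|_{\mathbb P_n}$, so the usual chaining machinery applies; moreover $g_{\boldsymbol\eta^*}\equiv 0$ lies in $\mathcal G_r$ (since $\boldsymbol\eta^{*(\boldsymbol\eta^*)}=\boldsymbol\eta^*$), so we may center chaining at $0$.

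First I would record the geometry of $\mathcal G_r$. Combining the Mean Value Theorem with Condition~\ref{cond:trunc} as in \eqref{eq:func_dist_bd}, and controlling the first-layer increment both (i) through the group-lasso/$\ell_1$ constraint defining $\Theta_r$ in \eqref{eq:theta_r} together with $\|\boldsymbol x\|_\infty\le X_{\max}$, and (ii) through the boundedness of $\mathcal X$ and of $\other(\boldsymbol\eta)$ on $\Theta$ together with the saturation bound $|\tanh a-\tanh b|\le 2\wedge|a-b|$, I obtain a diameter estimate $\sup_{g\in\mathcal G_r}\|g\|_{\mathbb P_n}\le R_0$ and an envelope $\sup_{g\in\mathcal G_r}\|g\|_\infty\le b$, where $R_0$ and $b$ are of order $M_n c_{\alpha,m,2}$ up to a factor that grows linearly in $r$ for $r\le 1$ but saturates to an $r$-free constant for $r\ge 1$.

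Next I would run a chaining tail inequality for sub-Gaussian processes (equivalently, Dudley's bound for the conditional mean plus a Borell--Talagrand-type deviation): for suitable constants,
\[
\Pr\!\Big(\sup_{g\in\mathcal G_r}\Big|\tfrac1n\textstyle\sum_i W_i g(y_i,\boldsymbol x_i)\Big|\ge c\big(J+s\,R_0/\sqrt n\big)\Big)\le c_2 e^{-c_3 s^2},\qquad J=\tfrac1{\sqrt n}\int_{u_{\min}}^{R_0}\!\sqrt{H(u,\mathcal G_r,\|\cdot\|_{\mathbb P_n})}\,du,
\]
where $u_{\min}$ may be taken of order $R_0/(n\,\mathrm{poly}(m,K,X_{\max}))$ since below that scale the $n$-point empirical geometry is trivial. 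Inserting the three-term entropy bound of Lemma~\ref{lemma:entropy}, the $\log Q$ term contributes $R_0\sqrt{\log Q}$, the $m_{\other}$ term contributes $\sqrt{m_{\other}}\,R_0$ up to logarithmic factors, and the Maurey-type term $m_1(A/u)^2\log(1+pu^2/A^2)$ (with $A\propto rX_{\max}M_n\sqrt{m_1}/c_{\alpha,m_1}$) dominates; bounding $\sqrt{\log(1+pu^2/A^2)}\le\sqrt{\log p}+\sqrt{2\log(A/u)}$ and integrating yields a term of size $\sqrt{m_1}\,A\,\sqrt{\log p}\,\log(R_0/u_{\min})$ plus lower-order pieces, which is exactly where the $\sqrt{\log(pc_{\alpha,m,2}^2/(m_1X_{\max}^2)+1)}$ and $\log(nM_nc_{\alpha,m,2})$ factors of \eqref{eq:delta_def} come from. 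Summing the pieces and matching constants gives $cJ\le r\delta/(2\sqrt n)$ once the numerical constant in $\delta$ is taken large enough.

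Finally, choose $s$ so that $c(J+sR_0/\sqrt n)=Tr\delta/\sqrt n$; since $cJ\le r\delta/(2\sqrt n)$ and $T\ge1$, this forces $csR_0/\sqrt n\ge \tfrac12 Tr\delta/\sqrt n$, i.e. $s\ge cTr\delta/R_0$. Writing $\delta_0:=\delta/(c_3 M_nc_{\alpha,m,2})$ (the bracketed factor in the target exponent): when $r\le1$ we have $R_0\lesssim rM_nc_{\alpha,m,2}$, so $s\gtrsim T\delta_0$ and $s^2\gtrsim T^2\delta_0^2=T^2\delta_0^2(r^2\vee1)$; when $r\ge1$ we have $R_0\lesssim M_nc_{\alpha,m,2}$, so $s\gtrsim Tr\delta_0$ and $s^2\gtrsim T^2r^2\delta_0^2=T^2\delta_0^2(r^2\vee1)$. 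Feeding this into $c_2 e^{-c_3 s^2}$ gives the stated bound. I expect the main obstacle to be this last pair of steps done jointly: evaluating the Dudley integral for the Maurey-type term while collapsing all of the $m_1,m_{\other},K,X_{\max},\alpha$ dependence into $c_{\alpha,m,2}$ and the precise logarithmic factors of $\delta$, and selecting the right diameter and envelope estimates in the two $r$-regimes so that the deviation step produces exactly the factor $T^2(r^2\vee1)$ in the exponent rather than something weaker.
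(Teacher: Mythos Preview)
Your proposal is correct and follows essentially the same route as the paper. The paper's proof invokes Lemma~3.2 of \citet{geer2000empirical} directly, which is precisely the chaining tail inequality for sub-Gaussian (Rademacher) processes that you assemble by hand; the paper then computes the Dudley integral $\int_{r/n}^{R_n}H^{1/2}(u,\mathcal G_r,\|\cdot\|_{\mathbb P_n})\,du$ using the entropy bound of Lemma~\ref{lemma:entropy}, with the diameter $R_n^2\propto M_n^2c_{\alpha,m,2}^2(r^2\wedge 1)$ (your saturation observation), and the factor $(r^2\vee 1)$ in the exponent arises exactly as you describe, from $r^2/R_n^2\propto r^2/(r^2\wedge 1)=r^2\vee 1$.
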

\begin{proof}
	We apply Lemma 3.2 in \citet{geer2000empirical}. First we check all the conditions are satisfied. 
	
	Using Condition~\ref{cond:trunc} and Taylor expansion, we can show that for any $\boldsymbol{\eta}, \boldsymbol{\eta}' \in \Theta_r$
	\begin{align}
	& \left|\ell_{\eta}(y, \boldsymbol{x})-\ell_{\eta'}(y, \boldsymbol{x})\right|\\
	& \le
	c_0 G(y)
	\left[
	\left(m_{1}+\sqrt{m_{\other} K} \right)
	\wedge
	\left(
	\frac{X_{max}}{c_{\alpha, m_1}} \sum_{j=1}^{p} \Omega_{\alpha}(\theta_{1,\cdot,j}-\theta_{1,\cdot,j}^{'})+\|\other(\boldsymbol{\eta})-\other(\boldsymbol{\eta}^{'})\|_{1}
	\right)\right]
	\end{align}
	for the same $c_{\alpha, m_1}$ defined in Lemma~\ref{lemma:entropy}.
	Thus we have
	\begin{align}
	&\sup_{\eta\in\Theta_{r}}
	\frac{1}{n}\sum_{i=1}^{n}
	\left|\ell_{\eta}(y_i, \boldsymbol{x}_i)-\ell_{\eta'}(y_i, \boldsymbol{x}_i)\right|^{2}1\left\{ G(y_{i})\le M_{n}\right\}\\
	& \le
	c_1 M_n^2
	\left(m_{1}+\sqrt{m_{\other}K} + \frac{X_{max}}{c_{\alpha, m_1}} \right)^2
	\left(
	r^2 \wedge 1
	\right)
	\label{eq:r_n_def}
	\end{align}
	Set $R_n^2$ equal to the right hand side of \eqref{eq:r_n_def}
	and let $\tilde{R}_{n}^2 =
	c_1 M_n^2
	\left(m_{1}+\sqrt{m_{\other}K} + \frac{X_{max}}{c_{\alpha, m_1}} \right)^2
	r^2$.
	Then we can bound Dudley's integral in Lemma 3.2 as follows
	\begin{align*}
	& \int_{r/n}^{R_{n}}H^{1/2}\left(u,\mathcal{G}_{r},\|\cdot\|_{n}\right)du\\
	& \le \int_{r/n}^{\tilde{R}_{n}}H^{1/2}\left(u,\mathcal{G}_{r},\|\cdot\|_{n}\right)du \\
	& \le
	c_{2}
	r M_n c_{\alpha, m, 2}
	\left(
	\sqrt{\log Q}
	+\frac{X_{max}}{c_{\alpha, m_{1}}} \sqrt{\log\left(\frac{p c_{\alpha, m, 2}^{2}}{m_{1}X_{\text{max}}^{2}}+1\right)}
	\log\left(
	n M_n c_{\alpha, m, 2} \right)
	\right)
	,
	\end{align*}
	where $c_{\alpha, m, 2} = m_{1}+\sqrt{m_{\other}K} + \frac{X_{max}}{c_{\alpha, m_1}}$.
	
	Let
	$\delta =
	c_{3}
	M_n c_{\alpha, m, 2}
	\left(
	\sqrt{\log Q}
	+\frac{X_{max}}{c_{\alpha, m_{1}}} \sqrt{\log\left(\frac{p c_{\alpha, m, 2}^{2}}{m_{1}X_{\text{max}}^{2}}+1\right)}
	\log\left(
	n M_n c_{\alpha, m, 2} \right)
	\right).
	$
	Then for any $T \ge 1$, we have
	\begin{align*}
	&  \Pr\left(\sup_{\eta\in\Theta_{r}}\left|\frac{1}{n}\sum_{i=1}^{n}W_{i}\left[\ell_{\eta}\left(y_i, \boldsymbol{x}_i\right)-\ell_{\eta^{*(\eta)}}\left(y_i, \boldsymbol{x}_i\right)\right]\right|\ge
	T r \delta/\sqrt{n}
	\right)\\
	& \le 
	c_4 \exp\left(- c_4 \frac{T^2 r^2 \delta^{2}}{R_{n}^{2}}\right)
	\\
	& = c_{4}\exp\left(
	- c_4
	T^{2} 
	\left(
	\sqrt{\log Q}
	+\frac{X_{max}}{c_{\alpha, m_{1}}} \sqrt{\log\left(\frac{p c_{\alpha, m, 2}^{2}}{m_{1}X_{\text{max}}^{2}}+1\right)}
	\log\left(
	n M_n c_{\alpha, m, 2} \right)
	\right)^2
	(r^{2}\vee 1 )
	\right).
	\end{align*}
\end{proof}

Using Lemma~\ref{lemma:sym} above, combined with a slight modification to symmetrization Corollary 3.4 in \citet{geer2000empirical}, one can bound the difference of the truncated empirical processes over random observations $(\boldsymbol{x}_i, y_i)$.
\begin{corollary}
	Suppose the same assumptions hold as in Lemma~\ref{lemma:entropy}.
	Let $\delta$ be defined as in \eqref{eq:delta_def}.
	Then for all $r > 0$ and $T\ge1$, the probability of that
	$$
	\sup_{\eta\in\Theta_{r}}\left|v_{n}^{\text{trunc}}(\boldsymbol \eta)-v_{n}^{\text{trunc}}(\boldsymbol \eta^{*(\boldsymbol \eta)})\right|\ge c_3 T r \delta/\sqrt{n}
	$$
	holds for random $(\boldsymbol{x}_i, y_i)$ for $i = 1,...,n$ holds with probability no greater than
	$$
	c_{4}\exp\left(
	- c_5
	T^{2} 
	\left(
	\sqrt{\log Q}
	+\frac{X_{max}}{c_{\alpha, m_{1}}} \sqrt{\log\left(\frac{p c_{\alpha, m, 2}^{2}}{m_{1}X_{\text{max}}^{2}}+1\right)}
	\log\left(
	n M_n c_{\alpha, m, 2} \right)
	\right)^2
	(r^{2}\vee 1 )
	\right).
	$$
\end{corollary}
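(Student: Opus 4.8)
The plan is to derive the corollary from Lemma~\ref{lemma:sym} by a symmetrization argument that strips off the Rademacher variables and replaces the fixed design by the random observations $(\boldsymbol x_i,y_i)_{i=1}^n$. Writing $g_{\boldsymbol\eta}(y,\boldsymbol x)=\left(\ell_{\boldsymbol\eta}(y,\boldsymbol x)-\ell_{\boldsymbol\eta^{*(\boldsymbol\eta)}}(y,\boldsymbol x)\right)1\{G(y)\le M_n\}$, note that $v_{n}^{\text{trunc}}(\boldsymbol\eta)-v_{n}^{\text{trunc}}(\boldsymbol\eta^{*(\boldsymbol\eta)})=(\mathbb{P}_n-\mathbb{P})g_{\boldsymbol\eta}$ and that $\boldsymbol\eta\in\Theta_r$ implies $g_{\boldsymbol\eta}\in\mathcal{G}_r$, so the quantity to be controlled is $\sup_{g\in\mathcal{G}_r}\left|(\mathbb{P}_n-\mathbb{P})g\right|$.

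First I would apply the symmetrization inequality (the slight modification of Corollary~3.4 in \citet{geer2000empirical}): for universal constants $c',c''>0$,
\begin{equation*}
\Pr\left(\sup_{g\in\mathcal{G}_r}\left|(\mathbb{P}_n-\mathbb{P})g\right|\ge a\right)
\le c'\,\Pr\left(\sup_{g\in\mathcal{G}_r}\left|\tfrac1n\textstyle\sum_{i=1}^n W_i\, g(y_i,\boldsymbol x_i)\right|\ge a/c''\right),
\end{equation*}
valid as soon as $a$ is bounded below by a fixed multiple of $n^{-1/2}$ times the $L_2(\mathbb{P})$-radius of $\mathcal{G}_r$. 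I would take $a=c_3 T r\delta/\sqrt n$ and verify the admissibility condition by re-running the Taylor/Condition~\ref{cond:trunc} estimate from the proof of Lemma~\ref{lemma:sym}, now in expectation rather than in empirical-norm form, which gives $\sup_{g\in\mathcal{G}_r}\mathbb{P}g^2\le c\, M_n^2 c_{\alpha,m,2}^2\,(r^2\wedge1)$; since the definition of $\delta$ forces $\delta\ge c\, M_n c_{\alpha,m,2}$ and $T\ge1$, one gets $na^2\asymp T^2 r^2\delta^2\gtrsim M_n^2 c_{\alpha,m,2}^2(r^2\wedge1)$ for every $r>0$, so the inequality applies. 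The case $r\le1$ is the binding one, which is why $(r^2\wedge1)$ rather than $r^2$ is the relevant variance proxy and $(r^2\vee1)$ ends up in the exponent.

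Next I would bound the symmetrized probability on the right-hand side. Conditionally on the data, Lemma~\ref{lemma:sym} (whose proof is carried out for the truncated class $\mathcal{G}_r$) controls $\Pr\bigl(\sup_{g\in\mathcal{G}_r}\bigl|\tfrac1n\sum_{i=1}^n W_i\, g(y_i,\boldsymbol x_i)\bigr|\ge T r\delta/\sqrt n\bigr)$; replacing the level $T r\delta/\sqrt n$ by $a/c''=(c_3/c'')T r\delta/\sqrt n$ merely rescales $T$ by the fixed factor $c_3/c''$, so the same exponential bound persists with an adjusted leading constant $c_4$ and rate constant $c_5$. Because this bound is uniform over realizations of $(\boldsymbol x_i,y_i)$, taking expectation over the data (Fubini) converts the conditional statement into the unconditional bound claimed in the corollary.

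The main obstacle I anticipate is the bookkeeping around the admissibility condition for symmetrization: one must check, uniformly in $r>0$, that the deviation level $c_3 T r\delta/\sqrt n$ dominates the threshold demanded by Corollary~3.4 of \citet{geer2000empirical}, which is delicate in the small-$r$ regime and forces one to track the $(r^2\wedge1)$ versus $(r^2\vee1)$ dichotomy carefully. The remaining ingredients --- outer-probability/measurability handling of suprema over the noncountable index set $\Theta_r$, and absorbing symmetrization constants into $c_4$ and $c_5$ --- are routine.
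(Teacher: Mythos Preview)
Your proposal is correct and follows exactly the route the paper indicates: the paper does not give a detailed proof of this corollary but simply states that it follows from Lemma~\ref{lemma:sym} ``combined with a slight modification to symmetrization Corollary~3.4 in \citet{geer2000empirical},'' and you have faithfully filled in that sketch by applying the symmetrization inequality, invoking Lemma~\ref{lemma:sym} conditionally on the data, and then integrating over the data. Your additional bookkeeping on the admissibility threshold and the $(r^2\wedge 1)$ versus $(r^2\vee 1)$ dichotomy is more explicit than anything the paper writes, but it is the correct way to justify the ``slight modification'' the paper alludes to.
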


Now we are ready to bound the scaled truncated empirical process over the entire parameter space $\Theta$.
\begin{lemma}
	\label{lemma:trunc_emp}
	Suppose the same assumptions as Lemma~\ref{lemma:entropy}.
	Let $\tilde{\lambda}= {c_{9} \delta}/{\sqrt{n}}$ for $\delta$ defined in \eqref{eq:delta_def}.
	Then for any $T\ge1$, we have
	\begin{align*}
	& \Pr\left(
	\sup_{\eta\in\Theta}
	\frac{\left|v_{n}^{\text{trunc}}(\boldsymbol{\eta})-v_{n}^{\text{trunc}}(\boldsymbol{\eta}^{*(\boldsymbol \eta)})\right|}{
		\left(\sum_{j=1}^p  \Omega_\alpha(\boldsymbol\theta_{1,\cdot,j}-\boldsymbol \theta_{1,\cdot,j}^{*(\boldsymbol \eta)})
		+\|\other(\boldsymbol \eta)- \other(\boldsymbol \eta^{*(\boldsymbol{\eta})})\|_{2}\right)\vee\tilde{\lambda}
	}
	\ge Tc_6\tilde{\lambda}\right)\\
	& \le c_7 \log n
	\exp\left(
	- c_{8}
	T^{2} 
	\left(
	\sqrt{\log Q}
	+\frac{X_{max}}{c_{\alpha, m_{1}}} \sqrt{\log\left(\frac{p c_{\alpha, m, 2}^{2}}{m_{1}X_{\text{max}}^{2}}+1\right)}
	\log\left(
	n M_n c_{\alpha, m, 2} \right)
	\right)^2
	\right)
	.
	\end{align*}
\end{lemma}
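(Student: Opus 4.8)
The plan is to apply the peeling (slicing) device to lift the sub-level-set bound furnished by the preceding corollary to the scaled supremum over all of $\Theta$. Write $\tau(\boldsymbol\eta)=\sum_{j=1}^p\Omega_\alpha(\boldsymbol\theta_{1,\cdot,j}-\boldsymbol\theta_{1,\cdot,j}^{*(\boldsymbol\eta)})+\|\other(\boldsymbol\eta)-\other(\boldsymbol\eta^{*(\boldsymbol\eta)})\|_2$ for the quantity appearing in the denominator, so that $\Theta_r=\{\boldsymbol\eta\in\Theta:\tau(\boldsymbol\eta)\le r\}$, and abbreviate by $A$ the expression $\sqrt{\log Q}+\tfrac{X_{\max}}{c_{\alpha,m_1}}\sqrt{\log(p c_{\alpha,m,2}^2/(m_1 X_{\max}^2)+1)}\,\log(nM_nc_{\alpha,m,2})$ that occurs squared in the exponents of Lemma~\ref{lemma:sym} and of the corollary. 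Set $\tilde\lambda=c_9\delta/\sqrt n$.

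First I would record a crude but $p$-free uniform bound: by Condition~\ref{cond:trunc} and the Mean Value Theorem --- the estimate used in the proof of Lemma~\ref{lemma:sym}, taking the first branch of the minimum there --- on the truncated event we have $|\ell_{\boldsymbol\eta}(y,\boldsymbol x)-\ell_{\boldsymbol\eta^{*(\boldsymbol\eta)}}(y,\boldsymbol x)|\le c_0 M_n(m_1+\sqrt{m_{\other}K})$ for every $\boldsymbol\eta\in\Theta$, hence $|v_n^{\text{trunc}}(\boldsymbol\eta)-v_n^{\text{trunc}}(\boldsymbol\eta^{*(\boldsymbol\eta)})|\le\bar M:=2c_0 M_n(m_1+\sqrt{m_{\other}K})$. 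Consequently, whenever $\tau(\boldsymbol\eta)>R_{\max}:=\bar M/(c_6\tilde\lambda)$ the ratio in the lemma is already below $Tc_6\tilde\lambda$ for all $T\ge1$, so it suffices to control the supremum over $\{\boldsymbol\eta:\tau(\boldsymbol\eta)\le R_{\max}\}$. Since $M_n=O(\sqrt{\log n})$ and $\delta$ is polylogarithmic in $n$ and $p$, one has $R_{\max}/\tilde\lambda=\bar M/(c_6\tilde\lambda^2)=O(n\cdot\mathrm{polylog})$, so $S:=\lceil\log_2(R_{\max}/\tilde\lambda)\rceil=O(\log n)$.

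Next I would slice dyadically: $\{\boldsymbol\eta:\tau(\boldsymbol\eta)\le R_{\max}\}=\Theta_{\tilde\lambda}\cup\bigcup_{s=1}^{S}\big(\Theta_{2^s\tilde\lambda}\setminus\Theta_{2^{s-1}\tilde\lambda}\big)$. On $\Theta_{\tilde\lambda}$ the denominator equals $\tilde\lambda$, so the bad event forces $|v_n^{\text{trunc}}(\boldsymbol\eta)-v_n^{\text{trunc}}(\boldsymbol\eta^{*(\boldsymbol\eta)})|\ge Tc_6\tilde\lambda^2=(c_6c_9)\,T\tilde\lambda\,\delta/\sqrt n$, which (choosing $c_6$ large enough relative to the constant $c_3$ of the corollary, and to $c_9$) is contained in the corollary's event with $r=\tilde\lambda$. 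On the $s$-th shell the denominator is at least $2^{s-1}\tilde\lambda$, and the same computation with $r=2^s\tilde\lambda$ shows the bad event is contained in the corollary's event for that $r$, with the same multiplier $T$. A union bound then bounds the probability in the lemma by
\[
c_4\exp\!\big(-c_5 T^2 A^2\big)+\sum_{s=1}^{S}c_4\exp\!\Big(-c_5 T^2 A^2\big((2^s\tilde\lambda)^2\vee1\big)\Big).
\]
Among these terms, the indices with $2^s\tilde\lambda\le1$ --- of which there are at most $O(\log n)$, again by the polylogarithmic size of $\tilde\lambda^{-1}$ relative to $\sqrt n$ --- each contribute $c_4\exp(-c_5T^2A^2)$, while the indices with $2^s\tilde\lambda>1$ contribute terms that decay geometrically in the exponent and hence sum to $O(1)\cdot c_4\exp(-c_5T^2A^2)$. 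Collecting everything and relabeling constants yields $c_7\log n\,\exp(-c_8 T^2 A^2)$, which is the claim.

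I expect the only genuinely delicate point to be this constant-counting bookkeeping: keeping the prefactor at $\log n$ --- rather than a power of $n$, or something $T$- or $p$-dependent --- relies on the two features above, namely that the uniform increment bound $\bar M$ is only polylogarithmic in $n$ and free of $p$ (so the number of dyadic shells is $O(\log n)$ uniformly in $T$ and $p$), and that the exponent in the corollary has the $(r^2\vee1)$ form (so the ``small-radius'' shells, which cannot exploit any $r^2$ decay, are exactly the ones producing the $\log n$ factor, while the ``large-radius'' shells sum to a constant). Checking that a single $c_6$ can be chosen so that the bad events nest inside the corollary's events on every shell simultaneously is routine once the decomposition is in place.
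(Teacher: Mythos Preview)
Your proposal is correct and follows essentially the same peeling argument as the paper: dyadic slicing of $\Theta$ by the value of $\tau(\boldsymbol\eta)$, applying the corollary on each shell, and then splitting the union bound into the $O(\log n)$ small-radius shells (where $r^2\vee1=1$) and the large-radius shells that sum geometrically. The only cosmetic difference is that you introduce an explicit upper truncation via the uniform bound $\bar M$, whereas the paper simply runs the dyadic index from $J\approx\log_2\tilde\lambda$ to $\infty$ and lets the $(r^2\vee1)$ exponent handle the tail; your extra step is harmless but unnecessary.
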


\begin{proof}
	We use a peeling argument by partitioning $\Theta$ into
	\[
	\Theta=\left[\cup_{j=J}^{\infty}\Theta_{j}\right]\cup\Theta_{J-1}
	\]
	where
	$
	J=\max\left\{ j:2^{j}\ge\tilde{\lambda}\right\} = - c_6 \log (n)$
	and
	\[
	\Theta_{j}=
	\left\{
	\boldsymbol \eta \in \Theta:
	2^{j-1}<
	\sum_{j=1}^p \Omega_\alpha ({\boldsymbol\theta}_{1,\cdot,j} - \boldsymbol{\theta}_{1,\cdot,j}^{*(\boldsymbol \eta)})+
	\|\other (\boldsymbol \eta)-\other(\boldsymbol \eta^{*(\boldsymbol \eta)})\|_{2}
	\le 2^{j}
	\right\}
	\quad \forall j=J,...,\infty
	\]
	and
	\[
	\Theta_{J-1}=\left\{
	\boldsymbol \eta\in \Theta:
	\sum_{j=1}^p \Omega_\alpha ({\boldsymbol\theta}_{1,\cdot,j} - \boldsymbol{\theta}_{1,\cdot,j}^{*(\boldsymbol \eta)})+
	\|\other (\boldsymbol \eta)-\other(\boldsymbol \eta^{*(\boldsymbol \eta)})\|_{2}
	\le2^{J-1}
	\right\} .
	\]
	Then using a peeling argument, we have
	\begin{align*}
	&  \Pr\left(
	\sup_{\eta\in\Theta}\frac{\left|v_{n}^{\text{trunc}}(\boldsymbol\eta)-v_{n}^{\text{trunc}}(\boldsymbol \eta_{0})\right|}{
		\left(
		\sum_{j=1}^p \Omega_\alpha ({\boldsymbol\theta}_{1,\cdot,j} - \boldsymbol{\theta}_{1,\cdot,j}^{*(\boldsymbol \eta)})+
		\|\other (\boldsymbol \eta)-\other(\boldsymbol \eta^{*(\boldsymbol \eta)})\|_{2}
		\right)
		\vee\tilde{\lambda}}
	\ge T \tilde{\lambda}
	\right)\\
	& \le \sum_{j=J-1}^{\infty} \Pr\left(\sup_{\eta\in\Theta_j}\frac{\left|v_{n}^{\text{trunc}}(\boldsymbol \eta)-v_{n}^{\text{trunc}}(\boldsymbol \eta_{0})\right|}{
		\left(
		\sum_{j=1}^p \Omega_\alpha ({\boldsymbol\theta}_{1,\cdot,j} - \boldsymbol{\theta}_{1,\cdot,j}^{*(\boldsymbol \eta)})+
		\|\other (\boldsymbol \eta)-\other(\boldsymbol \eta^{*(\boldsymbol \eta)})\|_{2}
		\right)
		\vee\tilde{\lambda}}\ge T \tilde{\lambda}
	\right)\\
	& \le c_{7}J
	\exp\left(
	- c_8
	T^{2} 
	\left(
	\sqrt{\log Q}
	+\frac{X_{max}}{c_{\alpha, m_{1}}} \sqrt{\log\left(\frac{p c_{\alpha, m, 2}^{2}}{m_{1}X_{\text{max}}^{2}}+1\right)}
	\log\left(
	n M_n c_{\alpha, m, 2} \right)
	\right)^2
	\right)
	\\
	&+\sum_{j=1}^{\infty}c_{9}
	\exp\left(
	- c_8
	2^{2j}
	T^{2} 
	\left(
	\sqrt{\log Q}
	+\frac{X_{max}}{c_{\alpha, m_{1}}} \sqrt{\log\left(\frac{p c_{\alpha, m, 2}^{2}}{m_{1}X_{\text{max}}^{2}}+1\right)}
	\log\left(
	n M_n c_{\alpha, m, 2} \right)
	\right)^2
	\right)
	\\
	& \le c_{10}
	\log n
	\exp\left(
	- c_{11}
	T^{2} 
	\left(
	\sqrt{\log Q}
	+\frac{X_{max}}{c_{\alpha, m_{1}}} \sqrt{\log\left(\frac{p c_{\alpha, m, 2}^{2}}{m_{1}X_{\text{max}}^{2}}+1\right)}
	\log\left(
	n M_n c_{\alpha, m, 2} \right)
	\right)^2
	\right).
	\end{align*}
	
\end{proof}

The last step to proving Theorem~\ref{thm:emp_proc} is to control the remainder term
\begin{align}
\label{eq:remain}
v_{n}^{\rem}(\boldsymbol\eta) =
v_{n}(\boldsymbol\eta)-v_{n}^{\text{trunc}}(\boldsymbol\eta).
\end{align}

To bound the remainder term, we incorporate the fact that $y - f^*$ is a sub-gaussian random variable.
\begin{lemma}
	\label{lemma:remainder}
	Suppose $\epsilon$ are independent sub-gaussian random variables.
	Suppose Condition~\ref{cond:trunc} holds.
	For any $\kappa \ge 1$, we can choose some $c > 0$ such that for $M_{n} > 0$, we have
	\begin{align}
	\label{eq:remain_bd}
	\begin{split}
	\Pr\left(
	\frac{
		\left|v_{n}^{\rem}(\boldsymbol \eta) - v_{n}^{\rem}(\boldsymbol{\eta}^{*(\boldsymbol \eta)})\right|
	}{
		\sum_{j=1}^p \Omega_\alpha ({\boldsymbol\theta}_{(j)} - \boldsymbol{\theta}_{0, (j)}^{(\boldsymbol \eta)})
		+\|\other(\boldsymbol \eta)-\other(\boldsymbol \eta^{*(\boldsymbol \eta)})\|_{2}
	}
	\ge \frac{c_{13}\delta}{\sqrt{n}}
	\right) \le O_{p}\left(
	\frac{c_{\alpha, m, 2} \exp\left(- M_{n}^{2} \right) }{\delta}
	\right)
	\end{split}
	\end{align}
	where $\delta$ was defined in \eqref{eq:delta_def}.
\end{lemma}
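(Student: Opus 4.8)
The plan is to kill the supremum over the infinite-dimensional parameter $\boldsymbol\eta$ at the very first step, by bounding $|\ell_{\eta}-\ell_{\eta^{*(\boldsymbol\eta)}}|$ by $G(y)$ times (a constant $\wedge$ the penalty $\rho(\boldsymbol\eta)$); dividing by $\rho(\boldsymbol\eta)$ then leaves only a single scalar tail functional of $G(y)$ to control, which I would estimate deterministically via the sub-gaussian tail of $\epsilon$ and on the data via Markov's inequality.

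First I would expand, using the definitions of $v_n$, $v_n^{\text{trunc}}$, $v_n^{\rem}$,
\[
v_{n}^{\rem}(\boldsymbol\eta)-v_{n}^{\rem}(\boldsymbol\eta^{*(\boldsymbol\eta)})
=(\mathbb{P}_{n}-\mathbb{P})\left[\bigl(\ell_{\eta}(y,\boldsymbol x)-\ell_{\eta^{*(\boldsymbol\eta)}}(y,\boldsymbol x)\bigr)\mathbf{1}\{G(y)>M_{n}\}\right],
\]
so its modulus is at most $(\mathbb{P}_{n}+\mathbb{P})\bigl[|\ell_{\eta}-\ell_{\eta^{*(\boldsymbol\eta)}}|\,\mathbf{1}\{G(y)>M_{n}\}\bigr]$. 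The mean-value bound derived in the proof of Lemma~\ref{lemma:sym} (a consequence of Condition~\ref{cond:trunc}) gives, with $\rho(\boldsymbol\eta)\coloneqq\sum_{j}\Omega_{\alpha}(\boldsymbol\theta_{1,\cdot,j}-\boldsymbol\theta_{1,\cdot,j}^{*(\boldsymbol\eta)})+\|\other(\boldsymbol\eta)-\other(\boldsymbol\eta^{*(\boldsymbol\eta)})\|_{2}$ denoting the exact denominator of the lemma,
\[
|\ell_{\eta}(y,\boldsymbol x)-\ell_{\eta^{*(\boldsymbol\eta)}}(y,\boldsymbol x)|\le c_{0}\,G(y)\bigl[(m_{1}+\sqrt{m_{\other}K})\wedge c_{\alpha,m,2}\,\rho(\boldsymbol\eta)\bigr].
\]
Dividing by $\rho(\boldsymbol\eta)$ and using $\min(a,bz)/z\le b$ (with the convention $0/0=0$, consistent since the numerator also vanishes when $\rho(\boldsymbol\eta)=0$) yields, \emph{uniformly over} $\boldsymbol\eta\in\Theta$,
\[
\frac{|v_{n}^{\rem}(\boldsymbol\eta)-v_{n}^{\rem}(\boldsymbol\eta^{*(\boldsymbol\eta)})|}{\rho(\boldsymbol\eta)}
\le c_{0}c_{\alpha,m,2}\Bigl(\mathbb{P}_{n}\bigl[G(y)\mathbf{1}\{G(y)>M_{n}\}\bigr]+\mathbb{P}\bigl[G(y)\mathbf{1}\{G(y)>M_{n}\}\bigr]\Bigr).
\]
The right-hand side is free of $\boldsymbol\eta$, so the supremum over $\Theta$ is automatic and only two scalars remain.

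For the deterministic scalar: in the classification setting $Y\in\{0,1\}$, so $G(y)$ is bounded and both scalars are identically $0$ once $M_n$ exceeds that bound, making the lemma trivial; in the regression setting $G(y)\le C_0(|y|+C_1)\le C_0(\|f^*\|_\infty+C_1+|\epsilon|)$, so $\{G(y)>M_n\}\subseteq\{|\epsilon|>t_n\}$ with $t_n\ge cM_n$, and sub-gaussianity of $\epsilon$ gives $\mathbb{P}[G(y)\mathbf{1}\{G(y)>M_n\}]\le c\,e^{-cM_n^2}$ (absorbing the polynomial prefactor of the tail integral into a slightly smaller exponent). For the data-dependent scalar, Markov's inequality together with $\mathbb{E}\,\mathbb{P}_n[G(y)\mathbf{1}\{G(y)>M_n\}]=\mathbb{P}[G(y)\mathbf{1}\{G(y)>M_n\}]$ gives $\Pr(\mathbb{P}_n[G(y)\mathbf{1}\{G(y)>M_n\}]\ge u)\le c\,e^{-cM_n^2}/u$ (a union bound over the events $\{G(y_i)>M_n\}$ is an equally valid alternative). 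Taking $u\asymp\delta/(\sqrt n\,c_{\alpha,m,2})$, so that on the complementary event the right-hand side from the reduction above stays below $c_{13}\delta/\sqrt n$ (which also requires the deterministic scalar to be negligible, true once $M_n\asymp\sqrt{\log n}$), gives a failure probability $O(c_{\alpha,m,2}\sqrt n\,e^{-cM_n^2}/\delta)$; since $\sqrt n\,e^{-cM_n^2}=O(e^{-M_n^2})$ for the eventual $M_n\asymp\sqrt{\log n}$ once $c$ (whose required size is calibrated through $\kappa$, and which is folded into $c_{13}$) is large enough, this is the claimed $O(c_{\alpha,m,2}e^{-M_n^2}/\delta)$, using $\delta\asymp c_{\alpha,m,2}\cdot\mathrm{polylog}$.

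The only genuinely non-mechanical point is the first step: producing a majorant of $|v_{n}^{\rem}(\boldsymbol\eta)-v_{n}^{\rem}(\boldsymbol\eta^{*(\boldsymbol\eta)})|/\rho(\boldsymbol\eta)$ free of $\boldsymbol\eta$. The subtlety is that $\Theta$ imposes no constraint on $\boldsymbol\theta_1$, so $\rho(\boldsymbol\eta)$ is neither bounded away from $0$ nor bounded above, which is exactly why one needs the two-branch ($\min$) form of the Lipschitz bound — the $m_1+\sqrt{m_{\other}K}$ branch controls the ratio when $\rho(\boldsymbol\eta)$ is large, the linear branch when it is small — together with the $0/0=0$ convention. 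Once that $\boldsymbol\eta$-free bound is in hand, no metric-entropy or chaining argument is needed (in contrast to the truncated part $v_{n}^{\text{trunc}}$): the truncated-off piece is dominated by a single non-random function of $y$, so controlling the scalar empirical mean $\mathbb{P}_{n}[G(y)\mathbf{1}\{G(y)>M_{n}\}]$ is all that is left, and the remaining work — tracking which constants depend on $\kappa$ and matching the $\delta/\sqrt n$ scaling — is routine bookkeeping.
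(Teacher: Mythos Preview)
Your proposal is correct and follows essentially the same route as the paper: obtain an $\boldsymbol\eta$-free pointwise bound on the ratio via the Lipschitz/Taylor estimate coming from Condition~\ref{cond:trunc}, then control the remaining scalar tail functional of $G(y)$ by Markov's inequality combined with the sub-gaussian truncated-moment bound $\mathbb{E}[|Z|\mathbf{1}\{|Z|\ge M\}]\le C'e^{-c'M^2}$. One minor remark: your emphasis on the two-branch $\min$ form is unnecessary here, since $\min(a,bz)/z\le b$ already follows from the linear branch alone, and indeed the paper just uses that linear bound directly.
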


\begin{proof}
	To bound this probability, first we find an upper bound that has a tail behavior that is easier to bound.
	By Taylor expansion, we have that
	\begin{align}
	\label{eq:remainder}
	\begin{split}
	& 
	\frac{
		\left|\ell_{\eta}(y, \boldsymbol x)-\ell_{\eta^{*(\boldsymbol{\eta})}}(y, \boldsymbol x)\right|1\left\{ G(y)>M_{n}\right\}
	}{
		\sum_{j=1}^p \Omega_\alpha ({\boldsymbol\theta}_{(j)} - \boldsymbol{\theta}_{0, (j)}^{(\boldsymbol \eta)})
		+\|\other(\boldsymbol \eta)-\other(\boldsymbol \eta^{*(\boldsymbol \eta)})\|_{2}
	}
	\\
	& \le
	C_{0}
	\left(|y|+C_{1}\right)
	1\left\{ C_{0}\left(|y|+C_{1}\right)\right\}
	\left(\frac{X_{\text{max}}}{c_{\alpha,m_1}}+\sqrt{m_{\other}}\right)
	.
	\end{split}
	\end{align}
	Then we can upper bound \eqref{eq:remain_bd} by
	\begin{align}
	\Pr
	\left(
	C_{0}
	\left(\frac{X_{\text{max}}}{c_{\alpha,m_1}}+\sqrt{m_{\other}}\right)
	\left(\mathbb{P}_{n}-\mathbb{P}\right)\left[\left(|y_{i}|+C_{1}\right)1\left\{ C_{0}\left(|y_{i}|+C_{1}\right)\ge M_{n}\right\} \right]
	\ge\delta
	\right)
	\label{eq:easy_remain_bd}
	\end{align}
	The probability in \eqref{eq:easy_remain_bd} can be bounded using Markov's inequality and the fact that sub-gaussian random variables $Z$ satisfy
	\begin{align}
	E\left[\left|Z\right|1\left\{ \left|Z\right|\ge M\right\} \right]\le C'\exp\left(-c'M^{2}\right)
	\end{align}
	for constants $C',c'>0$ that only depend on the sub-gaussian parameters.
\end{proof}

Finally we prove Theorem~\ref{thm:emp_proc} by setting $M_n = O_p(\sqrt{\log n})$ and combining the results in Lemmas~\ref{lemma:trunc_emp} and \ref{lemma:remainder}.

\pagebreak
\bibliography{nnet_nonparam}

\end{document}